\numberwithin{equation}{section}
\newtheorem{theorem}{Theorem}[section]
\newtheorem{lemma}[theorem]{Lemma}
\newtheorem{proposition}[theorem]{Proposition}
\newtheorem{corollary}[theorem]{Corollary}
\theoremstyle{definition}
\newtheorem{remark}{Remark}[section]
\newtheorem{assumption}{Assumption}[section]
\begin{document}

\begin{frontmatter}
\title{Stability, memory, and messaging tradeoffs in heterogeneous service systems}

\begin{aug}
\author[1]{\fnms{David} \snm{Gamarnik}\ead[label=e1]{gamarnik@mit.edu}}
\author[1]{\fnms{John N.} \snm{Tsitsiklis}\ead[label=e2]{jnt@mit.edu}}
\and
\author[2]{\fnms{Martin} \snm{Zubeldia}\ead[label=e3]{m.zubeldia.suarez@tue.nl}}

\address{David Gamarnik\\
John N. Tsitsiklis\\
Massachusetts Institute of Technology\\
\printead{e1}\\
\phantom{E-mail:\ }\printead*{e2}}

\address{Martin Zubeldia\\
Eindhoven University of Technology\\
\& University of Amsterdam\\
\printead{e3}}

\end{aug}

\vspace{3mm}
\today
\vspace{3mm}

\begin{abstract}
We consider a heterogeneous distributed service system, consisting of $n$ servers with unknown and possibly different  processing rates. Jobs with unit mean and independent processing times arrive as a renewal process of rate $\lambda n$, with $0<\lambda<1$, to the system. Incoming jobs are immediately dispatched to one of several queues associated with the $n$ servers. We assume that the dispatching decisions are made by a central dispatcher endowed with a finite memory, and with the ability to exchange messages with the servers.

We study the fundamental resource requirements (memory bits and message exchange rate) in order for a dispatching policy to be {\bf maximally stable}, i.e., stable whenever the processing rates are such that the arrival rate is less than the total available processing rate. First, for the case of Poisson arrivals and exponential service times, we present a policy that is maximally stable while using a positive (but arbitrarily small) message rate, and $\log_2(n)$ bits of memory. Second, we show that within a certain broad class of policies, a dispatching policy that exchanges $o\big(n^2\big)$ messages per unit of time, and with $o(\log(n))$ bits of memory, cannot be maximally stable. Thus, as long as the message rate is not too excessive, a logarithmic memory is necessary and sufficient for maximal stability.
\end{abstract}

%

\end{frontmatter}

\setcounter{tocdepth}{2}
\tableofcontents

\section{Introduction}
Distributed service systems are pervasive, from the checkout lines at the supermarket, to server farms for cloud computing. At a high level, many of these systems involve a stream of incoming jobs that are dispatched to a distinct queue associated with one of the servers (see Figure~\ref{fig:basicSetting} for a stylized model). Naturally, the behavior and performance of such systems depends on the dispatching policy.

 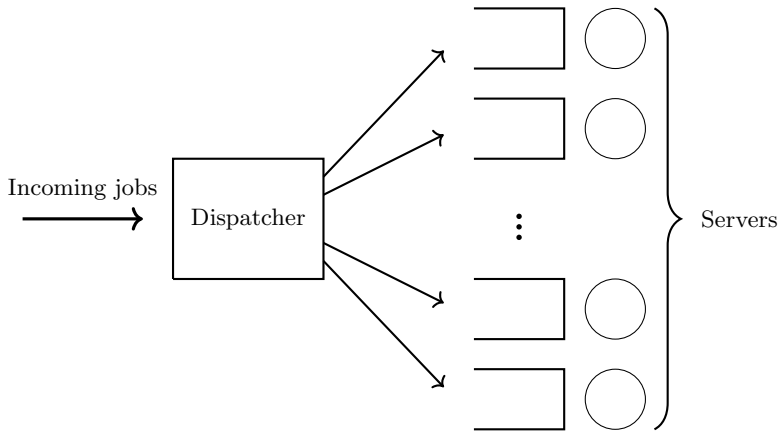
\begin{figure}[ht!]
 \centering
 \begin{tikzpicture}[scale=0.8]
    \draw [very thick,->] (-0.5,0) -> (1.5,0);
    \draw (0.5,0.5) node {Incoming jobs};
     \draw [thick] (2,-1) -- (2,1) -- (4.5,1) -- (4.5,-1) -- (2,-1);
     \draw (3.25,0) node {Dispatcher};
     \draw [thick, ->] (4.5,0.7) -- (6.5,2.8);
     \draw [thick, ->] (4.5,-0.7) -- (6.5,-2.8);
     \draw [thick, ->] (4.5,0.4) -- (6.5,1.4);
     \draw [thick, ->] (4.5,-0.4) -- (6.5,-1.4);
     \begin{scope}[yshift=3cm,xshift=5cm]
          \draw [thick] (2,0.5) -- (3.5,0.5) -- (3.5,-0.5) -- (2,-0.5);
          \draw (4.35,0) circle (5mm);
     \end{scope}
     \begin{scope}[yshift=1.5cm,xshift=5cm]
          \draw [thick] (2,0.5) -- (3.5,0.5) -- (3.5,-0.5) -- (2,-0.5);
          \draw (4.35,0) circle (5mm);
     \end{scope}
     \begin{scope}[xshift=5cm]
          \draw (2.75,0) node {\Huge\vdots};
     \end{scope}
     \begin{scope}[yshift=-3cm,xshift=5cm]
          \draw [thick] (2,0.5) -- (3.5,0.5) -- (3.5,-0.5) -- (2,-0.5);
          \draw (4.35,0) circle (5mm);
     \end{scope}
     \begin{scope}[yshift=-1.5cm,xshift=5cm]
          \draw [thick] (2,0.5) -- (3.5,0.5) -- (3.5,-0.5) -- (2,-0.5);
          \draw (4.35,0) circle (5mm);
     \end{scope}
          \draw [thick,decorate,decoration={brace,amplitude=10pt}]   (10,3.5) -- (10,-3.5) node [midway,right,xshift=.5cm] {Servers};
  \end{tikzpicture}
  \caption{Parallel server queueing system with a central dispatcher.}
    \label{fig:basicSetting}
  \end{figure}

While delay performance and stability are important factors when choosing how to operate these systems, the huge number of servers in applications such as multi-core processors and data centers has led to
a desire for low communication and memory requirements. On the other hand,
communication between the dispatcher and the servers, as well as memory at the dispatcher, allow the dispatcher to obtain and store information about the current state of the queues and about the characteristics of the servers, leading to better dispatching decisions. This points to a tradeoff between the resources utilized (in terms of communication overhead and memory), and the attainable delay performance and stability of the system.

In this paper, we consider a heterogeneous distributed service system, where servers can have different and unknown processing rates, and explore the tradeoff between the stability region of the system and the amount of communication overhead and memory used to gather and store relevant information. This complements the work in \cite{positiveResult,negativeResult}, where the authors explore the tradeoff between the delay performance and the amount of communication overhead and memory in a system with identical servers. In particular, in the setting of \cite{positiveResult,negativeResult} stability was easy to achieve (even with a static, randomized policy), and the focus was on the queueing delay going to zero (as the arrival rate and the number of servers jointly increase). In the present context, stability becomes an issue: the dispatcher must either ``learn" the rates of the different servers (and store this information in its memory), or must use some dynamic queue-size information to stabilize the system.

\subsection{Previous work}
There is a wide range of policies for operating the system described above, which result in different delay performances, stability regions, and resource utilizations. For example, a most simple policy is to dispatch jobs uniformly at random. This policy requires no message exchanges and no memory, but it is unstable if some server is slow enough. At the opposite extreme, the server can use dynamically available information and send incoming jobs to a shortest queue. This policy results in small delay and is maximally stable \cite{bramson}  but requires substantial communication overhead and an unbounded memory.

Many intermediate policies have been proposed and analyzed in the past, with a focus on low resource usage. Most notably, the Power-of-$d$-Choices (also known as SQ($d$)) was introduced and analyzed in \cite{mitzenmacher,vvedenskaya}, and results in relatively low average delays for the jobs, while requiring a message rate proportional to the arrival rate, and no memory. However, the blind randomization used by the policy renders it unstable if there is at least one sufficiently slow server. Another popular policy is Join-Idle-Queue \cite{joinIdleQueue,stolyar14}, which leverages the power of memory (one bit per server) to obtain vanishing queueing delays (as the arrival rate and the number of servers jointly increase) while using roughly the same amount of communication overhead as the Power-of-$d$-Choices. However, this policy also utilizes blind randomization that renders it unstable if there is at least one sufficiently slow server~\cite{persistentIdle}.

Recently, there has been a focus on policies that attain a vanishing queueing delay while minimizing their resource usage. In particular, in \cite{BorstPowerOfd} a variation of the Power-of-$d$-Choices was shown to yield a vanishing queueing delay while using no memory, and a message rate that is superlinear in the arrival rate. Moreover, variations of Join-Idle-Queue were shown to have vanishing queueing delays with either a memory of size (in bits) superlogarithmic in the number of servers and a message rate equal to the arrival rate \cite{positiveResult}, or a memory size (in bits) equal to the number of servers and a message rate strictly smaller (but still proportional) to the arrival rate \cite{Mark2020}. Last but not least, a novel combination of size-based load balancing and Round-Robin was shown to have vanishing queueing delay using unbounded memory and no communication overhead~\cite{Anselmi19}.

On the other hand, there are few policies in the literature that focus on maximizing the stability region. In \cite{loadBalancingWithMemory} the authors present and analyze a variation of Power-of-$d$-Choices that utilizes memory (of size logarithmic in the number of servers) to guarantee maximal stability. Furthermore, in \cite{persistentIdle} the authors propose yet another variation of Join-Idle-Queue, dubbed Persistent-Idle, that achieves maximal stability, without any randomization.
This policy requires a message rate proportional to the arrival rate, and a memory of size (in bits) at least proportional to the number of servers.

\subsection{Our contribution}
Instead of focusing on yet another policy or decision making architecture, we step back and address a more fundamental question: What are the message rate (from dispatcher to servers and from servers to dispatcher combined) and/or memory size requirements that are necessary and sufficient in order for a policy to be maximally stable? We are able to provide a fairly complete answer to this question.

\begin{itemize}
\item [a)] For the case of Poisson arrivals and exponential service times: If the message rate is positive {\bf and} the memory size (in bits) is logarithmic in the number of servers, we provide a fairly simple and natural policy that is maximally stable.
\item [b)] If the message rate  is sublinear in the square of the arrival rate {\bf and} the number of memory bits is sublogarithmic in the number of servers, we show that \emph{no} decision making architecture and policy, within a certain broad class of policies, is maximally stable. The main constraint that we impose on the policies that we consider is that they are ``weakly symmetric", in a sense to be defined later.
\end{itemize}

In a nutshell, as long as the message rate is not too excessive, a logarithmic memory is necessary and sufficient for maximal stability.

\begin{remark}
Our proposed policy is more economical than the most efficient maximally stable policy analyzed in earlier literature, the Power-of-$d$-Choices with memory policy \cite{loadBalancingWithMemory}, which requires a memory of size (in bits) at least logarithmic in the number of servers, and a message rate proportional to the arrival rate. In contrast, our proposed policy requires a memory size (in bits) logarithmic in the number of servers, and an arbitrarily small message rate.
\end{remark}




\section{Model and main results}\label{sec:results}
In this section, we present our modeling assumptions and main results. We present a unified framework for a broad set of dispatching policies, which includes most of the policies studied in the previous literature, and then present our negative result on the failure of maximal stability to hold for resource-constrained policies within this framework.

\subsection{Modeling assumptions}\label{sec:model_assumption}
We consider a distributed service system consisting of $n$ parallel servers, where each server is associated with an infinite capacity FIFO queue. For each  $i\in\{1,\dots,n\}$, the $i$-th server has constant (but unknown) service rate $\mu_i>0$. Despite the heterogeneity in the service rates, we assume that the total processing power of all servers is equal to $n$. Thus, the set of possible service rate vectors is
\begin{equation}\label{eq:serverRates}
  \Sigma_n \triangleq \left\{ {\bf \mu} \in (0,\infty)^n : \sum\limits_{i=1}^n \mu_i = n  \right\}.
\end{equation}

Jobs arrive to the system as a single renewal process of rate $\lambda n$ (for some fixed $\lambda\in(0,1)$), and their sizes are i.i.d., independent from the arrival process, and have a general distribution with unit mean. A central controller (dispatcher) is responsible for routing every incoming job to a queue, immediately upon arrival. We assume that the dispatcher can only rely on a limited amount of local memory and on messages that provide partial information about the state and parameters of the system. These messages (which are assumed to be instantaneous) can be sent from a server to the dispatcher at any time, or from the dispatcher to a server (in the form of queries) at the time of an arrival or at the time of a spontaneous message from a server. Messages from a server $i$ can only contain information about the state of its own queue (number of remaining jobs and the remaining workload of each one) and about its processing rate $\mu_i$. Within this context, a system designer has the freedom to choose a messaging policy, as well as the rules for updating the memory and for selecting the destination of an incoming job.\\

Regarding the performance metric, our focus is on the {\bf stability region} of a policy under the arrival rate $\lambda$, i.e., the largest subset of server rates $\Gamma_n(\lambda) \subset \Sigma_n$ such that the policy is stable for all ${\bf \mu} \in\Gamma_n(\lambda)$. We will formalize this definition in Subsection~\ref{sec:instability}.

\subsection{A maximally stable policy} \label{sec:stablePolicy}


In this subsection we propose a simple dispatching policy with the largest possible stability region (i.e., with stability region equal to $\Sigma_n$, for all $\lambda\in(0,1)$).

\subsubsection{Policy description}\label{sec:stablePolicyDescription}
For any fixed value of $n$, we consider the following policy. At any time, the dispatcher stores the ID of a single server in its memory.  This ID is initialized in an arbitrary way, and it is updated based on spontaneous messages from the servers. In particular, each server sends messages to the dispatcher as an independent Poisson process of rate $\alpha_n>0$, informing the dispatcher of its queue length (i.e., of the number of jobs in its queue or in service). When a message from a server arrives to the dispatcher, the dispatcher stores the ID of this server only if the sender's queue is shorter than the queue of the server that is currently stored in memory. In order to make this comparison, the queue length of the currently stored server is obtained by sending a query to it. Finally,  whenever a new job arrives to the system, it is sent to the server whose ID is stored in the dispatcher's memory (the server ID in memory does not change at this point).

\begin{remark}
This policy requires only $\lceil\log_2(n)\rceil$ bits of memory, and an arbitrarily small (but positive) average message rate of $3 \alpha_n n$.
\end{remark}

\subsubsection{Main result}\label{sec:stabilityResult}

When the arrival process is Poisson and the service times are exponentially distributed, the behavior of the system under this policy can be modeled as a continuous-time Markov chain $\big({\bf Q}(\cdot),I(\cdot)\big)$, where ${\bf Q}(\cdot)=\big( {\bf Q}_1(\cdot),\dots,{\bf Q}_n(\cdot) \big)$ is the vector of queue lengths and $I(\cdot)$ is the ID of the server stored in memory. In this setting, the stability of the policy is established in the following result.

\begin{theorem}\label{thm:stablePolicy}
  Suppose that the arrival process is Poisson, and that the job sizes are exponentially distributed. For any $n$, if $\alpha_n>0$, then the stability region of the policy described above is $\Sigma_n$, for all $\lambda\in(0,1)$.
\end{theorem}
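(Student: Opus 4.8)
The plan is to establish positive recurrence of the Markov chain $\big(\mathbf{Q}(\cdot),I(\cdot)\big)$ via a fluid-limit argument. Since the service rates $\mu_i$ and the messaging rate $\alpha_n$ do not grow with the state, the chain has uniformly bounded transition rates, hence is non-explosive, and the usual fluid-scaling machinery applies. I would consider the fluid-scaled processes $\bar{\mathbf{Q}}^{(r)}(t) = \mathbf{Q}(rt)/r$ along a sequence of initial conditions with $\sum_i Q_i(0)\to\infty$, establish (by the standard arguments) that the queue-length coordinates are tight with Lipschitz limits $\bar{\mathbf{Q}}(\cdot)$, while the memory variable $I$ is \emph{not} rescaled, and then show that every fluid limit satisfies $\sum_i \bar Q_i(t)=0$ for all $t\ge \big(\sum_i \bar Q_i(0)\big)/\delta$, where $\delta>0$ depends only on $\lambda$ and on the vector of rates. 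Fluid stability then yields positive recurrence, i.e., the stability region is all of $\Sigma_n$, for every $\mu\in\Sigma_n$ and every $\lambda\in(0,1)$.

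The heart of the argument, and the step I expect to be the main obstacle, is the description of the memory on the fluid time scale. Because messages arrive at the original rate $\alpha_n n>0$, on the fluid scale the memory is an \emph{infinitely fast} jump process, and one must identify its quasi-stationary behaviour given the slowly varying queue lengths. The relevant structural fact is that the memory moves only to a server with a \emph{strictly} shorter queue: a message from any server currently attaining $\min_j \bar Q_j(t)$ captures the memory, and once the memory sits at a server attaining the running minimum it can never be displaced by a longer queue. Hence the memory concentrates on $\mathcal{M}(t)=\arg\min_i \bar Q_i(t)$, so that (essentially) all of the arriving fluid, at rate $\lambda n$, is routed into $\mathcal{M}(t)$; in particular, whenever any server is empty, essentially all fluid is routed to empty servers. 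Making this rigorous is delicate precisely because the switching rule degenerates when several queues are tied at the minimum, so the "fast" process has no ergodic limit in the naive sense; the way around it is to avoid pinning down the routing proportions within $\mathcal M(t)$ and instead show directly — e.g. by tracking, on the prelimit scale, the most recent server to attain the running minimum — that the long-run fraction of fluid routed to any server whose queue strictly exceeds $\min_j \bar Q_j(t)$ is zero.

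Granting this, the drift of $g(t)=\sum_i \bar Q_i(t)$ is controlled by a short case analysis at a regular time $t$ with $g(t)>0$. If $\min_i \bar Q_i(t)>0$, then all servers are busy and, no matter where the fluid is routed, $\dot g(t)=\lambda n-\sum_i\mu_i=-(1-\lambda)n$. If instead some servers are empty, all fluid (rate $\lambda n$) is routed to empty servers, and either their combined rate absorbs it, in which case the (nonempty) busy servers still drain and $\dot g(t)\le -\sum_{i\ \mathrm{busy}}\mu_i\le -\min_i\mu_i<0$, or it does not, in which case every server processes at full rate and $\dot g(t)=\lambda n-\sum_i\mu_i=-(1-\lambda)n$. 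In all cases $\dot g(t)\le -\delta$ with $\delta=\min\{(1-\lambda)n,\ \min_i\mu_i\}>0$, so $g$ hits $0$ in time at most $g(0)/\delta$ and stays there, giving fluid stability. (An alternative that bypasses fluid limits is a multi-step Foster--Lyapunov argument with $V(\mathbf{Q},I)=\sum_i Q_i$ over a fixed horizon $T$: the expected change is $\lambda n T-\sum_i\mu_i\,\mathbb{E}[B_i]$, with $B_i$ the busy time of server $i$ on $[0,T]$, and one needs $\sum_i\mu_i\,\mathbb{E}[B_i]\ge \lambda n T+1$ once $\sum_i Q_i(0)$ is large — which again reduces to showing the memory keeps enough server capacity busy over $[0,T]$, the same difficulty in a different guise.)
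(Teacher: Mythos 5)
Your strategy (fluid limits plus a time-scale separation argument for the memory pointer) is genuinely different from what the paper does, and as written it has a gap at exactly the step you flag as the main obstacle. The claim that ``the long-run fraction of fluid routed to any server whose queue strictly exceeds $\min_j \bar Q_j(t)$ is zero'' is the heart of the matter, and it is asserted with a suggestion of how one might prove it (tracking the most recent server to attain the running minimum) rather than proved; the degeneracy at ties, the fact that the stored server is a single prelimit coordinate that is not rescaled, and the fact that the stored queue itself grows at rate $\lambda n-\mu_I$ while it holds the memory all have to be handled there, and none of the standard quasi-stationarity lemmas apply off the shelf because the fast process has no ergodic limit on the tie set. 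In addition, the final drift dichotomy is too coarse: when some fluid-zero servers lift off and others do not, it is not true that ``every server processes at full rate,'' nor that the whole empty set absorbs the inflow. The repair is to apply your two cases to the subset $E_0$ of fluid-zero servers that \emph{remain} at zero on a right-neighborhood of the regular point $t$: if $E_0\neq\emptyset$, essentially all inflow is routed to $E_0$ (by your routing claim, since the lifted-off servers are strictly above the minimum), flow balance on $E_0$ gives outflow $\lambda n$ there, and the busy servers contribute $\sum_{i\notin E_0}\mu_i\geq\mu_{\min}$, so $\dot g\leq-\mu_{\min}$; if $E_0=\emptyset$, all servers are busy and $\dot g=-(1-\lambda)n$. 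So the skeleton can be made to work, but the central routing lemma still has to be written out, and that is where essentially all the work lies.

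For comparison, the paper's proof (Appendix~\ref{sec:proof_stability}) avoids fluid limits and any analysis of the fast memory process altogether: it is a one-step Foster--Lyapunov argument on the chain $(\mathbf{Q},I)$ with the explicit function $\Xi(\mathbf q,i)=\frac{2\mu_{\max}}{\alpha_n}\mathbf q_i+\sum_j \mathbf q_j^2$ in \eqref{eq:fullLyapunov}. The point of the first term is that every potential memory switch, which occurs at rate $\alpha_n$ towards each strictly shorter queue, contributes drift $-2\mu_{\max}\sum_j(\mathbf q_i-\mathbf q_j)^+$ regardless of how small $\alpha_n$ is (the $1/\alpha_n$ weight cancels the $\alpha_n$ switching rate); this exactly offsets the potentially harmful term $2\lambda n\,\mathbf q_i$ coming from the quadratic part, via the identity $\mathbf q_j+(\mathbf q_i-\mathbf q_j)^+=\max\{\mathbf q_i,\mathbf q_j\}$, and yields a uniformly negative drift outside the finite set \eqref{eq:smallSetStability}. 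What your approach would buy, if the routing lemma were completed, is a more structural picture (the memory chases the argmin on the fluid scale) and a template that might extend beyond the Markovian setting, which is in the spirit of the paper's conjecture about renewal arrivals and general service times; what the paper's approach buys is a short, fully rigorous proof for every fixed $n$ and every $\alpha_n>0$ with no asymptotic separation-of-scales reasoning. Your parenthetical multi-step linear-Lyapunov alternative faces the same unresolved difficulty, as you note, so it does not close the gap either.
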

This stability result is established by constructing an appropriate Lyapunov function. The proof is given in Appendix~\ref{sec:proof_stability}.\\

Theorem \ref{thm:stablePolicy} states that, at least in the Markovian case, the stability region of our proposed policy is the whole set of admissible rates $\Sigma_n$. Moreover, it implies that $\lceil \log_2(n) \rceil$ bits of memory and an average message rate of $3 \alpha_n n$ (which can be arbitrarily small) are sufficient for a policy to be always stable. We conjecture that our policy is always stable, even with renewal arrivals and generally distributed service times.

While Theorem \ref{thm:stablePolicy} ensures stability even when the message rate is arbitrarily small, a small message rate will result in poor steady-state delay performance of the policy. Indeed, since the policy sends all incoming jobs to the same queue between consecutive messages, a small message rate leads to large build ups of jobs in the queues. In particular, we conjecture that the steady-state queueing delay of our policy is of order $\Theta(1/\alpha_n)$.

Given this apparent tradeoff between the average message rate and the delay performance of the policy, the proposed policy is most useful for applications where a large stability region and a small communication overhead are preferred, and where there is tolerance for large delays. Furthermore, since the policy does not depend explicitly on the rates of the servers (or estimates thereof), it would continue to work even if the service rates changed slowly over time, which makes it robust.

\subsection{A general class of dispatching policies} \label{sec:unifiedFramework2}

In this subsection we present a unified framework that describes memory-based dispatching policies in systems with heterogeneous servers, which is slightly more general than the one introduced in \cite{negativeResult}.

Let $c_n$ be the number of memory bits available to the dispatcher. We define the corresponding set of memory states to be $\mathcal{M}_n \triangleq \left\{1,\dots,2^{c_n}\right\}$. 
Furthermore, we define the set of possible states at a server as the set of nonnegative sequences $\mathcal{Q}\triangleq\mathbb{R}_+^{\mathbb{Z}_+}$, where a sequence specifies the remaining workload of each job in that queue, including the one that is being served. (In particular, an idle server is represented by the zero sequence.) As long as a queue has a finite number of jobs, the queue state is a sequence that has only a finite number of non-zero entries. The reason that we include the workload of the jobs in the state is that we wish to allow for a broad class of policies that can take into account the remaining workload in the queues.
In particular, we allow for information-rich messages that describe the full workload sequence at the server that sends the message. We are interested in the process
\[ {\bf Q}(\cdot)= \big({\bf Q}_1(\cdot),\dots,{\bf Q}_n(\cdot)\big) = \Big(\big({\bf Q}_{1,j}(\cdot)\big)_{j=1}^\infty,\dots,\big({\bf Q}_{n,j}(\cdot)\big)_{j=1}^\infty\Big),\]
which takes values in the set $\mathcal{Q}^n$, and describes the evolution of the workload of each job in each queue. We are also interested in the process $M(\cdot)$ that describes the evolution of the memory state, and in a process $Z(\cdot)$ that describes the elapsed time since the arrival of the previous job.

\subsubsection{Fundamental processes and initial conditions}\label{sec:fund}
All processes of interest are driven by the following common fundamental processes:
\begin{enumerate}
  \item {\bf Arrival process:} A delayed renewal counting process $A_n(\cdot)$ with rate $\lambda n$, and event times $\{T_k\}_{k=1}^\infty$, defined on a probability space $(\Omega_A,\mathcal{A}_A,\mathbb{P}_A)$.
  \item {\bf Spontaneous messages process:} A Poisson counting process $R_n(\cdot)$ with rate $\beta_n$, and event times $\{T^s_k\}_{k=1}^\infty$, defined on a probability space $(\Omega_R,\mathcal{A}_R,\mathbb{P}_R)$.
  \item {\bf Job sizes:} A sequence of i.i.d. random variables $\{W_k\}_{k=1}^\infty$ with mean one, defined on a probability space $(\Omega_W,\mathcal{A}_W,\mathbb{P}_W)$.
  \item {\bf Randomization variables:} Eight independent and individually i.i.d. sequences of random variables $\{U_{1,k}\}_{k=1}^\infty,\dots,\{U_{8,k}\}_{k=1}^\infty$, uniform on $[0,1]$, defined on a common probability space $(\Omega_U,\mathcal{A}_U,\mathbb{P}_U)$.
  \item {\bf Initial conditions:} Random variables ${\bf Q}(0)$, $M(0)$, and $Z(0)$, defined on a common probability space $(\Omega_0,\mathcal{A}_0,\mathbb{P}_0)$.
\end{enumerate}
The whole system will be defined on the associated product probability space
\[ \big(\Omega_A\times\Omega_R\times\Omega_W\times\Omega_U\times\Omega_0, \mathcal{A}_A\times\mathcal{A}_R\times\mathcal{A}_W\times\mathcal{A}_U\times\mathcal{A}_0, \mathbb{P}_A\times\mathbb{P}_R\times\mathbb{P}_W\times\mathbb{P}_U\times\mathbb{P}_0\big), \]
to be denoted by $(\Omega,\mathcal{A},\mathbb{P})$. All of the randomness in the system originates from these fundamental processes, and everything else is a deterministic function of them.

\subsubsection{A construction of sample paths}\label{sec:samplePaths2}
We provide a construction of a Markov process $({\bf Q}(\cdot),M(\cdot),Z(\cdot))$, taking values in the set $\mathcal{Q}^n\times\mathcal{M}_n\times\mathbb{R}_+$. The memory process $M(\cdot)$ is piecewise constant, and can only jump at the time of an event. All processes considered will have the c\`adl\`ag property (right-continuous with left limits) either by assumption (e.g., the underlying fundamental processes) or by construction.

There are three types of events: job arrivals, spontaneous messages, and service completions. We now describe the sources of these events, and what happens when they occur.\\

\noindent{\bf Job arrivals:} At the time of the $k$-th event of the arrival process $A_n$, which occurs at time $T_k$ and involves a job with size $W_k$, the following transitions happen sequentially but instantaneously:
\begin{enumerate}
  \item First, the dispatcher chooses a set ${S}_k$ of distinct servers, from which it solicits information about their state, according to
  \[ {S}_k=f_1\Big(M\left(T_k^-\right),W_k,U_{1,k}\Big), \]
  where $f_1:\mathcal{M}_n\times\mathbb{R}_+\times[0,1]\to\mathcal{P}(\{1,\dots,n\})$ is a measurable function defined by the policy. Here, and in the sequel, $\mathcal{P}(A)$ stands for the power set of a set $A$.
  \item Second, messages are sent to the servers in the set ${S}_k$, and the servers respond with messages containing their queue states and their service rates. This results in a total of $2|{S}_k|$ messages exchanged. Using this information, the destination of the incoming job is chosen to be
  \[ D_k=f_2\Big(M\big(T_k^-\big),W_k,\Big\{\Big({\bf Q}_{i}\big(T_k^-\big),{\bf \mu}_{i},i\Big):i\in S_k\Big\},U_{2,k}\Big), \]
  where $f_2:\mathcal{M}_n \times\mathbb{R}_+ \times \mathcal{B}_n \times[0,1]\to\{1,\dots,n\}$ is a measurable function defined by the policy,  with $\mathcal{B}_n\subset \mathcal{P}\big(\mathcal{Q}\times\mathbb{R}_+\times\{1,\dots,n\}\big)$
  comprised of those sets of triples such that  the triples in a set have different third coordinates. Note that the destination of a job can depend not only on the current memory state, the job size, the set of queried servers, and the state of their queues, but also on the rates of the queried servers.
  \item Third, the memory state is updated according to
  \[ M(T_k)=f_3\Big(M\big(T_k^-\big),W_k,\Big\{\Big({\bf Q}_{i}\big(T_k^-\big),{\bf \mu}_{i},i\Big):i\in S_k\Big\},D_k,U_{3,k}\Big), \]
  where $f_3:\mathcal{M}_n\times\mathbb{R}_+\times \mathcal{B}_n \times\{1,\dots,n\}\times[0,1]\to\mathcal{M}_n$ is a measurable function defined by the policy. Note that the new memory state is obtained using the same information as for selecting the destination, including the rates of the queried servers, plus the destination of the job.
\end{enumerate}

\noindent{\bf Spontaneous messages:} At the time of the $k$-th event of the spontaneous message process $R_n$, which occurs at time $T^s_k$, the $i$-th server sends a spontaneous message to the dispatcher if and only if
\[ g_1\Big({\bf Q}\big(T^s_k\big),{\bf \mu},U_{4,k}\Big)=i, \]
where $g_1:\mathcal{Q}^n\times\mathbb{R}_+^n\times[0,1]\to\{0,1,\dots,n\}$ is a measurable function defined by the policy. On the other hand, no message is sent when $g_1\big({\bf Q}({T^s_k}),{\bf \mu},U_{4,k}\big)=0$. Note that the dependence of $g_1$ on ${\bf Q}$ and ${\bf \mu}$ allows the message rate at each server to depend on all servers' current workloads, and on their rates. This allows for policies that let servers with higher service rates send messages at a higher rate than servers with slower service rates.

When a spontaneous message from server $i$ arrives to the dispatcher, the following transitions happen sequentially but instantaneously:
\begin{enumerate}
\item First, the dispatcher chooses a set of distinct servers ${S}^s_k$, from which it solicits information about their state, according to
  \[ {S}^s_k=g_2\Big(M\left(T_k^-\right), i, {\bf Q}_i\big(T^s_k\big), \mu_i,U_{5,k}\Big), \]
  where $g_2:\mathcal{M}_n\times\{1,\dots,n\}\times\mathcal{Q}\times\mathbb{R}_+\times[0,1] \to\mathcal{P}(\{1,\dots,n\})$ is a measurable function defined by the policy. Note that the set of servers that are sampled not only depends on the current memory state but also on the index, queue state, and rate of the server that sent the message.
\item Second, messages are sent to the servers in the set ${S}^s_k$, and the servers respond with messages containing their queue states and their service rates. This results in a total of $2|{S}^s_k|$ messages exchanged. Using this information, the memory is updated to the new memory state
\[ M(T^s_k) = g_3\Big(M\big({T^s_k}^-\big), i, {\bf Q}_i\big(T^s_k\big), \mu_i,\Big\{\Big({\bf Q}_{j}\big(T_k^s\big),{\bf \mu}_{j},j\Big):j\in S^s_k\Big\},U_{6,k}\Big), \]
where $g_3: \mathcal{M}_n\times \{1,\dots,n\} \times \mathcal{Q}\times\mathbb{R}_+\times \mathcal{B}_n\times[0,1]
\to\mathcal{M}_n$ is a measurable function defined by the policy.
\end{enumerate}

\noindent{\bf Service completions:} Let $\{T^d_k(i)\}_{k=1}^\infty$ be the sequence of departure times at the $i$-th server. At those times, the $i$-th server sends a message to the dispatcher if and only if
\[ h_1\Big({\bf Q}_i\big({T^d_k(i)}\big),\mu_i,U_{7,k}\Big)=1, \]
where $h_1:\mathcal{Q}\times\mathbb{R}_+\times[0,1]\to\{0,1\}$ is a measurable function defined by the policy. In that case, the memory is updated to the new memory state
\[ M\Big(T^d_k(i)\Big) = h_2\Big(M\big({T^d_k(i)}^-\big), i, {\bf Q}_i\big(T^d_k(i)\big),\mu_i,U_{8,k}\Big), \]
where $h_2:\mathcal{M}_n\times\{1,\dots,n\}\times\mathcal{Q}\times\mathbb{R}_+\times[0,1]\to\mathcal{M}_n$ is a measurable function defined by the policy. Finally, no message is sent when $h_1\big({\bf Q}_i({T^d_k(i)}),\mu_i,U_{7,k}\big)=0$.

\begin{remark}
Note that this framework allows for policies that  are more general than those considered in \cite{negativeResult}. In particular, (i) some decisions can depend on the rates of the different servers, (ii) the dispatcher can sample servers whenever a spontaneous message arrives, and (iii) memory updates may involve randomization.
\end{remark}

We now introduce a symmetry assumption on the policies.

\begin{assumption}(Weakly symmetric policies.)\label{def:weakSymmetry}
We assume that the dispatching policy is weakly symmetric, in the following sense. For any given permutation of the servers $\sigma$, there exists a corresponding (not necessarily unique) permutation $\sigma_M$ of the memory states $\mathcal{M}_n$ that satisfies both of the following properties:
\begin{enumerate}
  \item For every $m\in\mathcal{M}_n$ and $w \in \mathbb{R}_+$, and if $U$ is a uniform random variable on $[0,1]$, then
      \[ \sigma\Big(f_1(m,w,U)\Big) \overset{d}{=} f_1\big(\sigma_M(m),w,U\big), \]
      where $\overset{d}{=}$ stands for equality in distribution. Note that this equality in distribution is only with respect to $U$.
  \item For every $m\in\mathcal{M}_n$, $w \in \mathbb{R}_+$, ${S}\in\mathcal{P}(\{1,\dots,n\})$, ${\bf q}\in\mathcal{Q}^{n}$, and ${\bf \mu}\in\mathbb{R}_+^{n}$, and if $V$ is a uniform random variable on $[0,1]$, then
   \begin{align*}
   &\sigma\Big(f_2\big(m,w,\big\{({\bf q}_i,{\bf \mu}_i,i):i\in S\big\},V\big)\Big) \\
   &\qquad\qquad\qquad\qquad\qquad \overset{d}{=} f_2\Big(\sigma_M(m),w,\Big\{\big({\bf q}_i,{\bf \mu}_i,\sigma(i)\big):i\in S\Big\},V\Big).
  \end{align*}
\end{enumerate}
\end{assumption}

\begin{remark}
 This assumption prevents any bias for or against a server, unless it is encoded in the memory in a sufficiently detailed way so that the assumption is satisfied. For example, in order to implement (in a weakly symmetric way) the randomized dispatching policy where incoming jobs are sent to a server with a probability proportional to its processing rate, the second condition in Assumption~\ref{def:weakSymmetry} requires the dispatching probabilities  to be encoded in memory, in a sufficiently detailed way.
\end{remark}

%

\begin{remark}
  Note that the universally stable policy introduced in Subsection~\ref{sec:stablePolicyDescription} falls within the class of policies defined by this general framework, and  it satisfies Assumption~\ref{def:weakSymmetry}.
\end{remark}

\subsection{Instability of resource constrained policies} \label{sec:instability}
In this subsection we state the main result about the instability of general weakly symmetric dispatching policies. Before stating this main result, we first define the {\bf average message rate} between the dispatcher and the servers as
\begin{align}
 \limsup_{t\to\infty} \frac{1}{t}&\left[ \sum\limits_{k=1}^{A_n(t)} 2|{S}_k| + \sum\limits_{k=1}^{R_n(t)} \Big(1+2|{S}^s_k|\Big) \mathds{1}_{\{1,\dots,n\}}\Big(g_1\big({\bf Q}\big({T^s_k}\big),{\bf \mu},U_{4,k}\big)\Big) \right. \nonumber \\
  &\qquad\qquad\qquad\qquad \left. + \sum\limits_{i=1}^n \sum\limits_{k:\, T^d_k(i)<t}
   \mathds{1}_{\{1\}}\Big(h_1\big({\bf Q}_i\big({T^d_k(i)}\big),{\bf \mu}_i,U_{7,k}\big)\Big)\right]. \label{eq:messageRate2}
\end{align}

Second, we provide a formal definition of our performance metric: the stability region of a policy. For each $n$, given a policy and an arrival rate $\lambda$, the {\bf stability region} of the policy under the arrival rate $\lambda$, denoted by $\Gamma_n(\lambda)$, is the set of all server rates in $\Sigma_n$
for which the process $\big({\bf Q}(\cdot),M(\cdot),Z(\cdot)\big)$ is positive Harris recurrent.\\

We are now ready to state our main negative result. It asserts that within the class of weakly symmetric policies that we consider, and under some constraints on the memory size and the message rate, the stability region
does not contain all possible rates.

\begin{theorem}[Instability of resource constrained policies]
\label{thm:instability}
For any constant $\lambda\in(0,1)$ and positive sequence $\{\alpha_n\}_{n\geq 1}$, there exists a sequence of stability regions $\big\{\Gamma_n(\lambda,\alpha_n)\big\}_{n\geq 1}$, where $\Gamma_n(\lambda,\alpha_n)\subsetneq \Sigma_n$ for all $n\geq 1$, with the following property.

Consider a sequence of weakly symmetric memory-based dispatching policies, i.e., that satisfy Assumption~\ref{def:weakSymmetry}, with at most $c_n\in o\big(\log (n)\big)$ bits of memory, and with an average message rate (cf. Equation~\ref{eq:messageRate2}) upper bounded by $\alpha_n \in o\big(n^2\big)$ almost surely. Then, for all $n$ large enough, the stability region of the policies under the arrival rate $\lambda$ are contained in $\Gamma_n(\lambda,\alpha_n)$.
\end{theorem}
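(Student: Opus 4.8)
\emph{Proof plan.} The plan is to exhibit, for each $n$, a single ``hard'' rate vector (together with all of its relabelings) on which \emph{every} resource‑constrained weakly symmetric policy is unstable. Concretely, I would fix $\lambda\in(0,1)$, set $r:=\lambda/2<\lambda$, and take $\mu^{*}_n\in\Sigma_n$ to have one ``fast'' coordinate equal to $R_n:=n-(n-1)r>0$ and the other $n-1$ ``slow'' coordinates equal to $r$; then set $\Gamma_n(\lambda,\alpha_n):=\Sigma_n\setminus\{\text{relabelings of }\mu^{*}_n\}$, which is a proper subset of $\Sigma_n$ for every $n$ and in fact depends on $\lambda$ alone. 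Since ``stability region $\subseteq\Gamma_n$'' means precisely ``not positive Harris recurrent under any relabeling of $\mu^{*}_n$'', and all relabelings are handled by the same argument with the fast server renamed, it suffices to fix the fast server to be server $i_0$ and derive a contradiction from the assumption that $\big({\bf Q}(\cdot),M(\cdot),Z(\cdot)\big)$ is positive Harris recurrent under $\mu^{*}_n$, with stationary distribution $\pi$.

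The first and main step would be to show that, under Assumption~\ref{def:weakSymmetry} together with $2^{c_n}<n$ (which holds for all large $n$ since $c_n\in o(\log n)$), the memory cannot encode server‑specific information. I would argue that the pairs $(\sigma,\sigma_M)$ admissible in Assumption~\ref{def:weakSymmetry} form a subgroup $\Gamma\le S_n\times\mathrm{Sym}(\mathcal M_n)$ projecting onto $S_n$, that $N_1:=\{\sigma:(\sigma,\mathrm{id})\in\Gamma\}$ is the (injective) projection of the normal subgroup $\Gamma\cap(S_n\times\{\mathrm{id}\})\trianglelefteq\Gamma$ and is therefore normal in $S_n$, and that the counting bound $|N_1|\ge n!/(2^{c_n})!>1$ forces $N_1\supseteq A_n$ (for $n\ge5$, where the only normal subgroups are $\{e\},A_n,S_n$). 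Hence $\sigma_M=\mathrm{id}$ is admissible for every even $\sigma$, with two consequences: (i) for each memory state $m$ and size $w$ the law of $f_1(m,w,U)$ is $A_n$‑invariant, so every server — in particular $i_0$ — lies in the queried set $S_k$ with conditional probability $\mathbb E[|f_1(m,w,U)|]/n$; and (ii) for each $m,w,S$ and each profile of queried states, the law of $f_2$ is invariant under all even permutations fixing $S$ pointwise, so whenever $|\{1,\dots,n\}\setminus S_k|\ge3$ the destination $D_k$, conditioned on $\{D_k\notin S_k\}$ and on everything the dispatcher observes at that arrival, is \emph{uniform} on $\{1,\dots,n\}\setminus S_k$. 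In words: a job can reach $i_0$ essentially only when $i_0$ happens to be queried.

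Next I would bound how often $i_0$ is queried and conclude. Positive Harris recurrence and the ergodic theorem turn the almost‑sure bound that the message rate \eqref{eq:messageRate2} is at most $\alpha_n$ into a bound on stationary means; in particular, writing $\mathbb E^{0}$ for the Palm expectation at an arrival, $2\lambda n\,\mathbb E^{0}[|S_k|]\le\alpha_n$, so by (i) the stationary rate of arrivals querying $i_0$ equals $\lambda\,\mathbb E^{0}[|S_k|]\le\alpha_n/(2n)=o(n)$ because $\alpha_n\in o(n^2)$. Using (ii), the stationary throughput into $i_0$ is at most this $o(n)$ term plus two lower‑order corrections — the rate $\lambda n\,\mathbb P^{0}(|S_k|>n-3)=o(n)$ of atypical arrivals that query almost everyone, and an $O(1/n)$‑per‑job routing ``leakage'' from (ii) that also contributes $o(n)$ — hence is $o(n)$. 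On the other hand, flow balance in $\pi$ gives that the throughput into each slow server is $\mu_i\,\pi(\{Q_i>0\})\le r$, so the $n-1$ slow servers jointly absorb at most $(n-1)r=(n-1)\lambda/2$; but the total input is $\lambda n$ and only $o(n)$ of it reaches $i_0$, so the slow servers must absorb $(\lambda-o(1))n$, which for large $n$ strictly exceeds $(n-1)\lambda/2$. This is the contradiction, and since it applies to every relabeling of $\mu^{*}_n$, the stability region is contained in $\Gamma_n(\lambda,\alpha_n)$ for all $n$ large enough.

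The step I expect to be the main obstacle is the memory step: making rigorous, at the stated generality (information‑rich messages, randomized memory updates, rate‑dependent sampling and routing), that $o(\log n)$ memory bits cannot single out a server under weak symmetry. The group‑theoretic core — $N_1\trianglelefteq S_n$, hence $N_1\supseteq A_n$ as soon as $2^{c_n}<n$ — is what makes this go through, and passing from it to the clean ``uniform on the non‑queried servers'' statement while correctly controlling the atypical events $\{|S_k|>n/2\}$, the spontaneous‑message queries (which inflate \eqref{eq:messageRate2} but never route a job), and the $O(1)$‑rate routing leakage, is where the remaining technical care is concentrated.
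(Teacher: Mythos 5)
Your proposal is essentially correct, but it reaches the theorem by a genuinely different route than the paper. The paper's proof picks a destabilizing rate vector that \emph{depends on} $\alpha_n$ (half of the servers with rate $\epsilon_n\in\Theta(e^{-\alpha_n/n})$), cites the uniform-sampling and uniform-dispatching facts from the earlier work (Propositions~\ref{prop:symSampling} and~\ref{prop:symDispatching}), and then argues \emph{pathwise}: restricting attention to jobs of size at least $1/2$, the a.s.\ bound on \eqref{eq:messageRate2} forces a constant fraction of arrivals to use samples of size $O(\alpha_n/n)$, such samples fall entirely inside the slow half with probability at least $3^{-O(\alpha_n/n)}$, and then uniform dispatching sends the job to a slow server with probability at least $1/3$; this yields a linear-in-$t$ lower bound on the workload (Lemma~\ref{lem:explodingWorkload}) with no appeal to stationarity. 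You instead (i) re-derive the symmetry consequences from scratch via the group argument ($\Gamma\le S_n\times\mathrm{Sym}(\mathcal{M}_n)$ surjecting onto $S_n$, $N_1\trianglelefteq S_n$, $|N_1|\ge n!/(2^{c_n})!>1$ when $2^{c_n}<n$, hence $N_1\supseteq A_n$), which is sound and in effect reproves the two cited propositions; (ii) use a different hard instance — one fast server and $n-1$ servers of rate $\lambda/2$ — whose excluded region is independent of $\alpha_n$; and (iii) argue by contradiction from positive Harris recurrence via Palm/flow-balance accounting: the fast server can be queried at rate at most $\alpha_n/(2n)=o(n)$, uniform leakage contributes $O(1)$, so it cannot absorb the $\Theta(n)$ load that stability would require. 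Your instance buys something the paper's does not: even for $\alpha_n\in\omega(n)$ (up to $o(n^2)$) the destabilizing rates are $\Theta(1)$ rather than exponentially small, strengthening the message of the paper's second remark after Theorem~\ref{thm:instability}; the paper's pathwise argument, in exchange, avoids the stationarity/ergodic-theorem machinery your plan leans on.

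One detail in your final accounting deserves explicit care: your contradiction is phrased in job counts (``the slow servers must absorb $(\lambda-o(1))n$''), but the capacity bound $(n-1)\lambda/2$ is a \emph{workload} rate, and since $f_1,f_2$ observe the job size $W_k$, the dispatcher could in principle bias large jobs toward queried servers, so ``$o(n)$ jobs reach $i_0$'' does not by itself bound the workload reaching $i_0$. The fix is short: since the Palm probability $\mathbb{P}^0(D_k=i_0 \mbox{ or } |S_k|>n/2)$ is $o(1)$ and the job-size law is a fixed mean-one distribution, absolute continuity of the integral gives $\mathbb{E}^0\big[W_k\,;\,D_k=i_0 \mbox{ or } |S_k|>n/2\big]=o(1)$, so the workload rate diverted away from the slow servers is $o(n)$ and the contradiction stands. (The paper sidesteps this same issue by counting only jobs of size at least $1/2$, via $p_{1/2}$.) With that addition, and the routine step equating per-queue input and output rates under positive Harris recurrence, your plan goes through.
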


The proof consists of showing that, when half of the servers have a sufficiently small service rate $\epsilon_n\in \Theta\big(e^{-\alpha_n/n}\big)$, the total workload of the system diverges, as a function of time, for all $n$ large enough.
It also relies heavily on a combinatorial result (Proposition  \ref{prop:symSampling}) from \cite{negativeResult}, on the limitations imposed by symmetry on a limited memory. The proof is given in Appendix~\ref{sec:proof_instability}.

\begin{remark}
Theorem~\ref{thm:instability} states that the stability region of a policy is contained in a proper subset of $\Sigma_n$, which only depends on $n$, the arrival rate $\lambda$, and on the message rate $\alpha_n$. This means that, for all $n$ large enough, and as long as $\alpha_n \in o\big(n^2\big)$, there is at least one vector of processing rates for which the system is unstable under any weakly symmetric policy with $o(\log(n))$ bits of memory, and a message rate that is upper bounded by $\alpha_n$.
\end{remark}

\begin{remark}
The most interesting regime is the one where $\alpha_n\in O(n)$, that is, when we have a constant number of messages per job. In this regime, when half of the servers have rate $\epsilon_n\in\Theta(1)$, weakly symmetric policies with $o\big(\log (n)\big)$ memory are unstable. In particular, resource constrained policies become unstable for a significant portion of the possible service rate vectors~$\Sigma_n$.

  On the other hand, when $\alpha_n\in\omega(n)$, our result requires half of the servers to have rate $\epsilon_n\in\Theta\big(e^{-\alpha_n/n}\big)$, which is exponentially small in $\alpha_n/n\in\omega(1)$. This suggests that, when the average message grows faster than $n$, it is only a very small portion of the possible service rate vectors that can destabilize all resource constrained policies.
\end{remark}

\subsection{Stability versus resources tradeoff}\label{sec:stabilitySummary}
In this subsection, we provide a visual summary of our results on the tradeoff between the stability region, and the memory and communication overhead of weakly symmetric dispatching policies.

First, according to Theorem~\ref{thm:stablePolicy}, with a memory size of at least $\lceil\log_2(n)\rceil$ bits and with an arbitrarily small message rate, we can obtain a weakly symmetric policy that is always stable (for any service rate vector in $\Sigma_n$). Second, Theorem~\ref{thm:instability} states that weakly symmetric policies with $o\big(\log(n)\big)$ bits of memory and a message rate of order $o\big(n^2\big)$ cannot be always stable. Finally, note that both the Join-Shortest-Queue policy, and a policy which sends incoming jobs to each server with a probability proportional to the server's rate, can be implemented by querying all servers at the time of each arrival. These policies require a message rate of order $\Theta\big(n^2\big)$, and no memory, and they are always stable. The three regimes are depicted in Figure~\ref{fig:stabilityRegimes}.

\begin{figure}[ht!]
\begin{center}
\begin{tikzpicture}[scale=1.3]
  \filldraw[black!10] (5,3.5) -- (5,4) -- (5.5,4) -- (5.5,3.5) -- (5,3.5);
  \filldraw[white] (2.5,0) -- (2.5,2) -- (5,2) -- (5,0) -- (2.5,0);
  \draw (5,3.5) -- (5,4) -- (5.5,4) -- (5.5,3.5) -- (5,3.5);
  \draw (6.85,3.75) node {Not always stable};
  \draw (5,2.9) -- (5,3.4) -- (5.5,3.4) -- (5.5,2.9) -- (5,2.9);
  \draw (6.6,3.15) node {Always stable};

  \filldraw[black!10] (0,0) -- (0,2) -- (2.5,2) -- (2.5,0) -- (0,0);

  \draw [->,thick] (-0.5,0) -- (5.1,0);
  \draw (6.6,0) node {Total message rate};
  \draw [->,thick] (0,-0.5) -- (0,4);
  \draw (0,4.3) node {Bits of memory};

  \draw (1.25,-0.3) node {$o\big(n^2\big)$};
  \draw (3.75,-0.3) node {$\Omega\big(n^2\big)$};

  \draw (-0.8,3) node {$\Omega(\log(n))$};
  \draw (-0.8,1) node {$o(\log(n))$};

  \draw (0,2) -- (5,2);
  \draw (2.5,0) -- (2.5,2);

  \draw (2.5,3) node {Theorem~\ref{thm:stablePolicy}};

  \draw (1.25,1) node {Theorem~\ref{thm:instability}};

  \draw (3.75,1.25) node {Weighted random};
  \draw (3.75,0.75) node {policy};

\end{tikzpicture}
\end{center}
\caption{Resource requirements for stable policies.}
\label{fig:stabilityRegimes}
\end{figure}
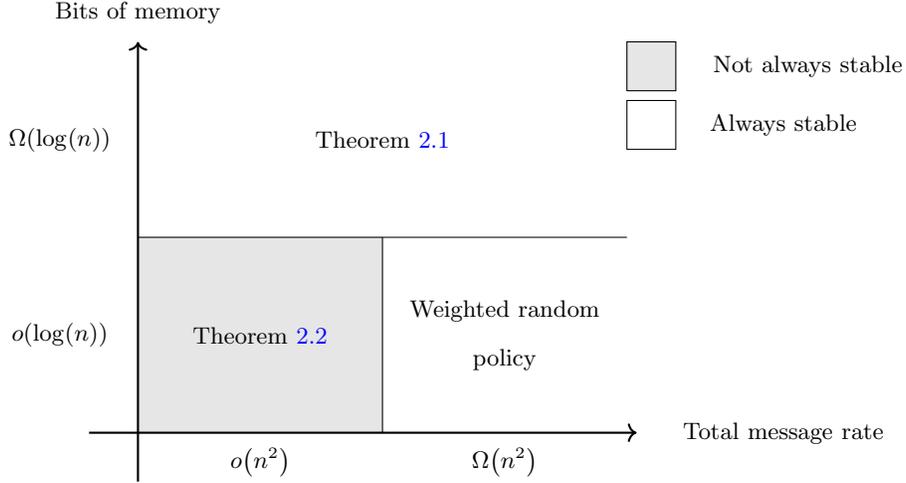

The only remaining question in this setting is whether stability can be guaranteed with zero communication overhead, and $\Omega(\log(n)$ bits of memory. In this case, no messages are exchanged, and the dispatcher can never obtain information about the rate of the servers. As a result, it can only dispatch jobs blindly, and stability fails for some server rates.

\section{Conclusions and future work}\label{sec:conclusions}

In this paper, we proposed a simple but efficient dispatching policy that requires a memory of size (in bits) logarithmic in the number of servers, and an arbitrarily small message rate, and showed that it has the largest possible stability region. The key to the stability properties of this policy is the fact that it never chooses the destination of a job by random sampling of the servers (like the Power-of-$d$-Choices) or by random dispatching of the job (like Join-Idle-Queue). On the other hand, we showed that when we have a memory size (in bits) sublogarithmic in the number of servers, and a message rate sublinear in the square of the arrival rate, all weakly symmetric dispatching policies have a sub-optimal stability region.

There are several interesting directions for future research. For example:
\begin{itemize}
\item [(i)] While policies can have the largest possible stability region using an arbitrarily small message rate and logarithmic memory, their delay performance could be arbitrarily bad. We conjecture that the average delay of a policy is at least inversely proportional to its average message rate per server.
\item [(ii)] In light of the symmetry assumption in Theorem \ref{thm:instability}, a natural question is whether the result still holds without it. In that case, the sampling of servers and dispatching of jobs need not be uniform (as established in propositions \ref{prop:symSampling} and \ref{prop:symDispatching} using the symmetry assumption), and it becomes unclear whether maximal stability is still impossible in the same regime.
\end{itemize}

\appendix

\section{Proof of Theorem 2.1}\label{sec:proof_stability}
Let us fix some $n$ and some arbitrary vector of processing rates in $\Sigma_n$. Let $\mu_{min}$ and $\mu_{max}$ be the smallest and largest processing rates in the chosen vector, respectively. In particular, note that they are positive.

We will use the Foster-Lyapunov criterion to show that the continuous-time Markov chain $\big({\bf Q}(\cdot),I(\cdot)\big)$ is positive recurrent. First, note that this process has state space $\mathbb{Z}_+^n\times\{1,\dots,n\}$. Its transition rates, denoted by $r_{\cdot\,\to\,\cdot}$, are as follows, where we use ${\bf e}_j$ to denote the $j$-th unit vector in $\mathbb{Z}^n_+$:
\begin{enumerate}
  \item Since incoming jobs are sent to the queue whose ID is stored in memory, each queue sees arrivals with rate:
      \[ r_{({\bf q},i)\to({\bf q}+{\bf e}_i,i)} = \lambda n. \]
  \item Transitions due to service completions occur according to the processing rate of each server, and they do not affect the ID stored in memory:
      \[ r_{({\bf q},i)\to({\bf q}-e_j,i)} = \mu_j \mathds{1}_{[1,\infty)}\big({\bf q}_j\big). \]
  \item Spontaneous messages are sent from each server to the dispatcher at a rate equal to $\alpha_n$, but the ID stored in memory only changes if the sender of the message has a shorter queue:
      \[ r_{({\bf q},i)\to({\bf q},j)} = \alpha_n \mathds{1}_{[0,{\bf q}_i-1]}\big({\bf q}_{j}\big). \]
      \item Any transitions that do not appear in the above have zero rate.
\end{enumerate}
Note that the Markov process $\big({\bf Q}(\cdot),I(\cdot)\big)$ on the state space $\mathbb{Z}_+^n\times\{1,\dots,n\}$ is irreducible, with all states reachable from each other. To show positive recurrence, we define the Lyapunov functions
\[ \Xi_1({\bf q},i) \triangleq \frac{2\mu_{max}}{\alpha_n}{\bf q}_i, \]
\[ \Xi_2({\bf q},i) \triangleq \sum\limits_{j=1}^n {\bf q}_j^2, \]
and
\begin{equation}\label{eq:fullLyapunov}
 \Xi({\bf q},i) \triangleq \Xi_1({\bf q},i) + \Xi_2({\bf q},i),
\end{equation}
and note that
\[ \sum\limits_{({\bf q}',i')\neq({\bf q},i)} \Xi({\bf q}',i')r_{({\bf q},i)\to({\bf q}',i')} < \infty,  \quad\quad \forall \, ({\bf q},i)\in \mathbb{Z}_+^n\times\{1,\dots,n\}.  \]
We also define the finite set
\begin{equation}\label{eq:smallSetStability}
 F_n\triangleq \left\{({\bf q},i)\in \mathbb{Z}_+^n\times\{1,\dots,n\} : \sum\limits_{j=1}^n {\bf q}_j < \frac{\lambda n \left( 1+\frac{2\mu_{max}}{\alpha_n}\right) + n + 1}{2\min\{1-\lambda,\mu_{min}\}} \right\}.
\end{equation}
For any state $({\bf q},i)$, we have
  \begin{align}
    &\sum\limits_{({\bf q}',i')\in \mathbb{Z}_+^n\times\{1,\dots,n\}}  \Big[\Xi_1({\bf q}',i')-\Xi_1({\bf q},i)\Big]r_{({\bf q},i)\to({\bf q}',i')} \nonumber \\
     &\qquad\qquad\quad = \lambda n \left( \frac{2\mu_{max}}{\alpha_n}\right)  - \frac{2\mu_{max}}{\alpha_n}\mu_i\mathds{1}_{[1,\infty)}\big({\bf q}_i\big)
     - \sum\limits_{j=1}^n 2\mu_{max}\big({\bf q}_i-{\bf q}_j\big)^+
     \nonumber \\
    &\qquad\qquad\quad \leq \lambda n \left( \frac{2\mu_{max}}{\alpha_n}\right) - \sum\limits_{j=1}^n 2\mu_{max}\big({\bf q}_i-{\bf q}_j\big)^+, \label{eq:xi1}
  \end{align}
and
\begin{align}
    &\sum\limits_{({\bf q}',i')\in \mathbb{Z}_+^n\times\{1,\dots,n\}}  \Big[\Xi_2({\bf q}',i')-\Xi_2({\bf q},i)\Big]r_{({\bf q},i)\to({\bf q}',i')} \nonumber \\
     &\qquad\qquad\qquad\qquad\qquad = \lambda n \left( 2{\bf q}_i + 1\right)  - \sum\limits_{j=1}^n \mu_j\big(2{\bf q}_j-1\big) \mathds{1}_{[1,\infty)}\big({\bf q}_j\big) \nonumber \\
    &\qquad\qquad\qquad\qquad\qquad = \lambda n \left( 2{\bf q}_i + 1\right) + \sum\limits_{j=1}^n \mu_j \mathds{1}_{[1,\infty)}\big({\bf q}_j\big) - 2\sum\limits_{j=1}^n \mu_j {\bf q}_j  \nonumber \\
    &\qquad\qquad\qquad\qquad\qquad \leq \lambda n \left( 2{\bf q}_i + 1\right) + n - 2\sum\limits_{j=1}^n \mu_j {\bf q}_j, \label{eq:xi2}
\end{align}
where in the last inequality we used that the vector of server rates $\mu$ is in $\Sigma_n$, so that
\begin{equation}\label{eq:properSum}
 \sum\limits_{j=1}^n \mu_j = n.
\end{equation}

Combining equations \eqref{eq:fullLyapunov}, \eqref{eq:xi1}, and \eqref{eq:xi2}, for any state $({\bf q},i)\notin F_n$, we have
\begin{align*}
    &\sum\limits_{({\bf q}',i')\in \mathbb{Z}_+^n\times\{1,\dots,n\}}  \Big[\Xi({\bf q}',i')-\Xi({\bf q},i)\Big]r_{({\bf q},i)\to({\bf q}',i')} \\
    &\qquad\quad \leq \lambda n \left( 1+\frac{2\mu_{max}}{\alpha_n}\right) + n + 2\lambda n {\bf q}_i - 2\sum\limits_{j=1}^n \Big[ \mu_j {\bf q}_j + \mu_{max}({\bf q}_i-{\bf q}_j)^+ \Big] \\
    &\qquad\quad\leq \lambda n \left( 1+\frac{2\mu_{max}}{\alpha_n}\right) + n
    + 2\lambda n {\bf q}_i - 2\sum\limits_{j=1}^n \mu_j \Big[ {\bf q}_j + ({\bf q}_i-{\bf q}_j)^+ \Big] \\
    &\qquad\quad= \lambda n \left( 1+\frac{2\mu_{max}}{\alpha_n}\right) + n
    + 2\lambda n {\bf q}_i - 2\sum\limits_{j=1}^n \mu_j \max\big\{ {\bf q}_i, \, {\bf q}_j \big\} \\
    &\qquad\quad= \lambda n \left( 1+\frac{2\mu_{max}}{\alpha_n}\right) + n
    + 2\lambda n {\bf q}_i - 2\sum\limits_{j=1}^n \mu_j \Big[ {\bf q}_i + ({\bf q}_j-{\bf q}_i)^+ \Big] \\
    &\qquad\quad= \lambda n \left( 1+\frac{2\mu_{max}}{\alpha_n}\right) + n
    + 2\lambda n {\bf q}_i - 2{\bf q}_i \sum\limits_{j=1}^n \mu_j  - 2\sum\limits_{j=1}^n \mu_j ({\bf q}_j-{\bf q}_i)^+ \\
    &\qquad\quad\overset{(*)}{=} \lambda n \left( 1+\frac{2\mu_{max}}{\alpha_n}\right) + n
    - 2(1-\lambda) n {\bf q}_i - 2\sum\limits_{j=1}^n \mu_j ({\bf q}_j-{\bf q}_i)^+ \\
    &\qquad\quad\leq \lambda n \left( 1+\frac{2\mu_{max}}{\alpha_n}\right) + n
    - 2(1-\lambda) n {\bf q}_i - 2\mu_{min}\sum\limits_{j=1}^n ({\bf q}_j-{\bf q}_i)^+ \\
    &\qquad\quad\leq \lambda n \left( 1+\frac{2\mu_{max}}{\alpha_n}\right) + n
    - 2\min\{1-\lambda,\mu_{min}\} \sum\limits_{j=1}^n \Big[{\bf q}_i+ ({\bf q}_j-{\bf q}_i)^+ \Big]\\
    &\qquad\quad= \lambda n \left( 1+\frac{2\mu_{max}}{\alpha_n}\right) + n
    - 2\min\{1-\lambda,\mu_{min}\} \sum\limits_{j=1}^n \max\big\{ {\bf q}_i, \, {\bf q}_j \big\} \\
    &\qquad\quad\leq \lambda n \left( 1+\frac{2\mu_{max}}{\alpha_n}\right) + n
    - 2\min\{1-\lambda,\mu_{min}\} \sum\limits_{j=1}^n {\bf q}_j\\
    &\qquad\quad\leq -1,
\end{align*}
where in equality $(*)$ we used Equation \eqref{eq:properSum}, and in the last inequality we used the fact that $({\bf q},i)\notin F_n$ and the definition of the finite set $F_n$ (Equation \eqref{eq:smallSetStability}). Then, the Foster-Lyapunov criterion \cite{fosterLyapunov} implies the positive recurrence of the Markov chain $\big({\bf Q}(\cdot),I(\cdot)\big)$. Finally, since this is true for all server rates in $\Sigma_n$, we conclude that $\Sigma_n$ is the stability region of the policy.

\section{Proof of Theorem 2.2}\label{sec:proof_instability}

Fix $\lambda$, and consider a vector of server rates in $\Sigma_n$ where $\lfloor n/2 \rfloor$ servers have rate $\epsilon_n>0$. We will show that, for any given $\lambda$, and for all $\epsilon_n$ small enough, every resource constrained dispatching policy that is weakly symmetric (i.e., satisfies Assumption~\ref{def:weakSymmetry}) overloads the slow servers.

The high-level outline of the proof is as follows. In Subsection~\ref{sec:limitation(i)} we show that under our weak symmetry assumption, the constraint on the number of bits available implies that the dispatcher treats all servers in a symmetric way, in some appropriate sense.

Then, in Subsection~\ref{subsec:highArrivalRate} we combine the results obtained in Subsection~\ref{sec:limitation(i)} with the bound on the average message rate to show that jobs are sent to slow servers (i.e., to servers with service rate $\epsilon_n$) with a positive rate that is bounded away from zero. This implies that the total workload of the servers diverges when $\epsilon_n$ is small enough, thus completing the proof.

In the proof that follows, we assume that the sequences $c_n$ (memory size) and $\alpha_n$ (message rate) have been fixed, and are of order $o(\log n)$ and $o(n^2)$, respectively.

\subsection{Local limitations of symmetry and finite memory} \label{sec:limitation(i)}
In this subsection we will show how the constraint of having only $o(\log(n))$ bits of memory affects the distribution of the sampled servers, and the distribution of the dispatched jobs. The results that we provide are corollaries or special cases of results in \cite{negativeResult}.\\

We first note that if the dispatcher has $o(\log(n))$ bits of memory, and if $n$ is large enough, then the distribution of the sampled servers is uniform among all sets of the same size.

\begin{proposition}\label{prop:symSampling}
Let $U$ be a uniform random variable over $[0,1]$. For all $n$ large enough, for every memory state $m\in\mathcal{M}_n$ and  every possible job size $w \in\mathbb{R}_+$, the following holds. Consider any set of servers $S\in\mathcal{P}(\{1,\dots,n\})$ with $|S|\in o(n)$. Consider the event
\[ B(m,w;S) = \big\{ f_1(m,w,U)=S\cup \{i\},\ \mbox{for some $i\notin S$}\big\},\]
and assume that the conditional probability measure
\[ \mathbb{P}\left(\,\, \cdot \,\, \big| \ B(m,w;S)\right) \]
is well-defined. Then,
\begin{equation*}
 \mathbb{P}\left(j\in f_1(m,w,U) \ \big| \ B(m,w;S)\right)
 \end{equation*}
 is the same for all $j \notin S$.
\end{proposition}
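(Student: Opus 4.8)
\section*{Proof proposal for Proposition~\ref{prop:symSampling}}

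The plan is to reduce the statement to a classical fact about small-index subgroups of a symmetric group, exploiting weak symmetry together with the memory constraint $c_n\in o(\log n)$. Fix $m$, $w$, and a set $S$ with $|S|\in o(n)$, write $N:=n-|S|$, and for $m'\in\mathcal{M}_n$ and $\ell\notin S$ set $p_\ell(m'):=\mathbb{P}\big(f_1(m',w,U)=S\cup\{\ell\}\big)$. The first step is to note that on the event $B(m,w;S)$ the set $f_1(m,w,U)$ equals $S\cup\{i\}$ for a unique $i\notin S$, so for every $j\notin S$ one has $\{j\in f_1(m,w,U)\}\cap B(m,w;S)=\{f_1(m,w,U)=S\cup\{j\}\}$, and hence
\[
\mathbb{P}\big(j\in f_1(m,w,U)\ \big|\ B(m,w;S)\big)=\frac{p_j(m)}{\sum_{j'\notin S}p_{j'}(m)}.
\]
Thus it suffices to prove that $p_j(m)$ has the same value for all $j\notin S$.

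Next, let $G$ be the group of permutations of $\{1,\dots,n\}$ that fix $S$ pointwise, so that $G$ is (isomorphic to) the symmetric group on $\{1,\dots,n\}\setminus S$, and call a permutation $\sigma_M$ of $\mathcal{M}_n$ \emph{admissible for} $\sigma$ if it satisfies both properties of Assumption~\ref{def:weakSymmetry}. For $\sigma\in G$ and any admissible $\sigma_M$, I would apply the first property of Assumption~\ref{def:weakSymmetry} to the set $S\cup\{j\}$: equating the probabilities that $\sigma\big(f_1(m,w,U)\big)$ and $f_1(\sigma_M(m),w,U)$ equal $S\cup\{\sigma(j)\}$, and using that $\sigma$ fixes $S$ pointwise, yields the key identity $p_j(m)=p_{\sigma(j)}(\sigma_M(m))$ for all $j\notin S$. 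Now define $G_m:=\{\sigma\in G:\text{some admissible }\sigma_M\text{ satisfies }\sigma_M(m)=m\}$. A routine verification — admissible memory permutations are closed under composition and inversion, and the identity is admissible for the identity — shows that $G_m$ is a subgroup of $G$, and that if $\sigma_M(m)=\sigma'_M(m)$ for admissible $\sigma_M,\sigma'_M$ then $\sigma^{-1}\sigma'\in G_m$ (witnessed by $\sigma_M^{-1}\sigma'_M$). Consequently, choosing for each coset of $G_m$ in $G$ a representative $\sigma_C$ and an admissible $(\sigma_C)_M$, the map $C\mapsto(\sigma_C)_M(m)$ is a well-defined injection from the cosets into $\mathcal{M}_n$, so $[G:G_m]\le|\mathcal{M}_n|=2^{c_n}$. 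Applying the key identity with $\sigma\in G_m$ (so $\sigma_M(m)=m$) shows that $j\mapsto p_j(m)$ is invariant under $G_m$.

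It then remains to show that $G_m$ acts transitively on $\{1,\dots,n\}\setminus S$. Since $c_n\in o(\log n)$ we have $2^{c_n}=n^{o(1)}$, and since $|S|\in o(n)$ we have $N=(1-o(1))n$; hence for all $n$ large enough $N\ge5$ and $[G:G_m]\le 2^{c_n}<N$. By the classical fact that, for $N\ge5$, the only subgroups of the symmetric group on $N$ letters of index less than $N$ are the whole group and the alternating group (letting the symmetric group act on the cosets gives a homomorphism into a smaller symmetric group, whose kernel is normal, and one invokes the simplicity of the alternating group), we conclude that $G_m$ contains the alternating group on $\{1,\dots,n\}\setminus S$, which is transitive there. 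Therefore $p_j(m)$ is constant in $j\notin S$ (indeed equal to $1/N$), completing the proof.

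The step I expect to be the main obstacle is pinning down the bound $[G:G_m]\le 2^{c_n}$: Assumption~\ref{def:weakSymmetry} only guarantees the \emph{existence} of a compatible memory permutation for each server permutation, with no canonical or homomorphism-respecting choice available, so one must work with the possibly multivalued correspondence $\sigma\mapsto\sigma_M(m)$ and argue carefully that it still descends to a well-defined injection on cosets of $G_m$. This is precisely the combinatorial content isolated in \cite{negativeResult}, of which the present statement is a special case; one could equally well invoke that result as a black box, after which the conditioning reduction of the first step and the symmetric-group fact of the last step are routine.
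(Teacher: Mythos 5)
Your argument is correct, but note how it relates to what the paper actually does: the paper's own ``proof'' of Proposition~\ref{prop:symSampling} is a one-line citation of Proposition 5.1 in \cite{negativeResult} (plus the remark that the $|S|\leq\sqrt{n}$ hypothesis there relaxes to $|S|\in o(n)$), whereas you give a self-contained derivation of the cited fact. Your route --- reduce the conditional statement to constancy of $p_j(m)=\mathbb{P}\big(f_1(m,w,U)=S\cup\{j\}\big)$ over $j\notin S$, use weak symmetry with permutations fixing $S$ pointwise to get $p_j(m)=p_{\sigma(j)}(\sigma_M(m))$, bound the index of $G_m$ by $|\mathcal{M}_n|=2^{c_n}$ via the coset-injection argument, and then invoke the classical fact that for $N\geq 5$ a subgroup of the symmetric group on $N$ letters of index less than $N$ contains the alternating group and is therefore transitive --- is precisely the combinatorial content that \cite{negativeResult} isolates, so in substance you are reconstructing the black box rather than bypassing it; what this buys is a proof that visibly works for any $|S|\in o(n)$ and makes explicit where $c_n\in o(\log n)$ enters (namely $2^{c_n}<n-|S|$ for large $n$). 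The steps you flag as routine do go through: closure of admissible memory permutations under composition and inversion holds exactly because both properties in Assumption~\ref{def:weakSymmetry} are quantified over \emph{every} memory state (and, for $f_2$, over every ${\bf q}$, ${\bf \mu}$ and every index set, which allows the needed reindexing), and for this proposition closure with respect to the first property alone already suffices. Two harmless blemishes: the parenthetical ``indeed equal to $1/N$'' should apply to the conditional probability, since $p_j(m)=\mathbb{P}\big(B(m,w;S)\big)/N$ rather than $1/N$ in general; and, as in the paper's own statement, ``for all $n$ large enough'' with $|S|\in o(n)$ must be read uniformly over $|S|\leq s_n$ for a fixed $o(n)$ sequence.
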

\begin{proof}
This is a special case of Proposition 5.1 in \cite{negativeResult}, noting that while the statement of that proposition requires $|S|\leq\sqrt{n}$, its proof
goes through under the weaker assumption $|S|\in o(n)$.
\end{proof}

\begin{corollary}\label{cor:symSampling}
Let $U$ be a uniform random variable over $[0,1]$. For all $n$ large enough, for every memory state $m\in\mathcal{M}_n$, for every possible job size $w \in\mathbb{R}_+$, and for any set of servers $S\in\mathcal{P}(\{1,\dots,n\})$ with $|S|\in o(n)$, we have
\[ \mathbb{P}\Big(f_1\big(m,w,U\big)=S\Big) = \mathbb{P}\Big(f_1\big(m,w,U\big)=\sigma(S)\Big), \]
for every permutation $\sigma$.
\end{corollary}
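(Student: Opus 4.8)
The goal is to upgrade Proposition~\ref{prop:symSampling}, which only says that the conditional distribution of the ``extra'' server is uniform over $\{1,\dots,n\}\setminus S$, into the full permutation-invariance statement $\mathbb{P}(f_1(m,w,U)=S)=\mathbb{P}(f_1(m,w,U)=\sigma(S))$ for every permutation $\sigma$ and every $S$ with $|S|\in o(n)$. The plan is to induct on the cardinality $|S|=\ell$, peeling off one element of $S$ at a time and using Proposition~\ref{prop:symSampling} at each step to move that element to its image under (a suitable decomposition of) $\sigma$.

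The base case $\ell=0$ (or $\ell=1$) is immediate: for $\ell=0$, $\mathbb{P}(f_1(m,w,U)=\emptyset)$ is trivially permutation-invariant; for $\ell=1$, $\mathbb{P}(f_1(m,w,U)=\{j\})$ being the same for all $j$ is exactly Proposition~\ref{prop:symSampling} applied with $S=\emptyset$ (so $B(m,w;\emptyset)=\{f_1(m,w,U)\text{ is a singleton}\}$). For the inductive step, it suffices to show that any two sets $S$ and $S'$ of the same size $\ell$ are related by a chain of ``single swaps''; since the symmetric group is generated by transpositions this reduces matters to: if $S'=(S\setminus\{a\})\cup\{b\}$ with $a\in S$, $b\notin S$, then $\mathbb{P}(f_1(m,w,U)=S)=\mathbb{P}(f_1(m,w,U)=S')$. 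To see this, set $T=S\setminus\{a\}=S'\setminus\{b\}$, which has size $\ell-1\in o(n)$, and apply Proposition~\ref{prop:symSampling} with the set $T$: conditioned on the event $B(m,w;T)=\{f_1(m,w,U)=T\cup\{i\}\text{ for some }i\notin T\}$, the probability that a given $j\notin T$ lies in $f_1(m,w,U)$ is the same for all such $j$; in particular it is the same for $j=a$ and $j=b$, which gives $\mathbb{P}(f_1(m,w,U)=T\cup\{a\})=\mathbb{P}(f_1(m,w,U)=T\cup\{b\})$, i.e. $\mathbb{P}(f_1(m,w,U)=S)=\mathbb{P}(f_1(m,w,U)=S')$. (One should first dispose of the degenerate case where $B(m,w;T)$ has probability zero, in which case both $S$ and $S'$ have probability zero and there is nothing to prove.)

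Finally, to conclude the corollary from this single-swap invariance: given an arbitrary permutation $\sigma$ and a set $S$ with $|S|=\ell\in o(n)$, write the passage from $S$ to $\sigma(S)$ as a finite sequence of single-swap moves, each replacing one element of the current set by an element outside it, with all intermediate sets of cardinality $\ell\in o(n)$. (Concretely: for each element of $S$ that is not already in $\sigma(S)$, swap it out for an element of $\sigma(S)\setminus S$; this terminates in at most $\ell$ steps and every intermediate set has size exactly $\ell$.) Chaining the equalities from the previous paragraph along this sequence yields $\mathbb{P}(f_1(m,w,U)=S)=\mathbb{P}(f_1(m,w,U)=\sigma(S))$.

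The only mild subtlety — and the place one must be slightly careful — is the well-definedness caveat attached to Proposition~\ref{prop:symSampling}: it is phrased conditionally on $B(m,w;T)$ having positive probability. The clean way to handle this is to observe that the unconditional identity $\mathbb{P}(f_1(m,w,U)=T\cup\{a\})=\mathbb{P}(f_1(m,w,U)=T\cup\{b\})$ holds in all cases — if $\mathbb{P}(B(m,w;T))>0$ it follows by multiplying the conditional uniformity by $\mathbb{P}(B(m,w;T))$, and if $\mathbb{P}(B(m,w;T))=0$ then both sides are zero since the events $\{f_1(m,w,U)=T\cup\{a\}\}$ and $\{f_1(m,w,U)=T\cup\{b\}\}$ are each contained in $B(m,w;T)$. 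No genuine obstacle arises; the argument is purely combinatorial bookkeeping on top of Proposition~\ref{prop:symSampling}.
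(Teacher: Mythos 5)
Your proposal is correct and takes essentially the same route as the paper's proof: the paper decomposes $\sigma$ into transpositions and, for each transposition moving some $a\in S$ to $b\notin S$, applies Proposition~\ref{prop:symSampling} with the set $S\setminus\{a\}$ to equate $\mathbb{P}\big(f_1(m,w,U)=S\big)$ and $\mathbb{P}\big(f_1(m,w,U)=(S\setminus\{a\})\cup\{b\}\big)$, which is exactly your single-swap step, and then iterates. Your explicit handling of the degenerate case $\mathbb{P}\big(B(m,w;T)\big)=0$ is a minor point the paper leaves implicit; otherwise the two arguments coincide.
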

\begin{proof}
In order to simplify notation, we omit the dependence of $f_1$ on $m$ and $w$.
Let us fix a set $S$, and a transposition $\tau$. If $\tau(S)=S$, then it is trivially true that
\[ \mathbb{P}\big(f_1(U)=S\big) = \mathbb{P}\big(f_1(U)=\tau(S)\big). \]
On the other hand, if $\tau(S)\neq S$, then there exists some $i\in S$ such that $\tau(i)\notin S$. In that case, we have:
\begin{align*}
  \mathbb{P}\big(f_1(U)=S\big)  & = \mathbb{P}\big(f_1(U)=S \,\big|\, |f_1(U)|=|S| \big)\mathbb{P}\big( |f_1(U)|=|S| \big) \\
  & =\mathbb{P}\Big(i\in f_1(U) \,\Big|\, \big\{ |f_1(U)|=|S| \big\} \cap \big\{ S\backslash\{i\} \subset f_1(U) \big\} \Big) \\
   &\qquad\qquad \cdot\mathbb{P}\Big(S\backslash\{i\} \subset f_1(U) \,\Big|\, |f_1(U)|=|S| \Big) \mathbb{P}\big( |f_1(U)|=|S| \big)\\
  & =\mathbb{P}\Big(\tau(i)\in f_1(U) \,\Big|\, \big\{ |f_1(U)|=|S| \big\} \cap \big\{ S\backslash\{i\} \subset f_1(U) \big\} \Big) \\
  &\qquad\qquad \cdot\mathbb{P}\Big(S\backslash\{i\} \subset f_1(U) \,\Big|\, |f_1(U)|=|S| \Big) \mathbb{P}\big( |f_1(U)|=|S| \big) \\
  & = \mathbb{P}\big(f_1(U)=\tau(S)\big),
\end{align*}
where in the second to last equality we used Proposition~\ref{prop:symSampling}.

Finally, since any permutation $\sigma$ can be obtained as a sequence of transpositions, applying the previous argument iteratively yields
\[ \mathbb{P}\big(f_1(U)=S\big) = \mathbb{P}\big(f_1(U)=\sigma(S)\big), \]
for every permutation $\sigma$.
\end{proof}

\begin{remark}
   Although all sets of servers of the same size have the same probability of being sampled, the memory state and the incoming job size can influence the number of sampled servers.
\end{remark}

Similarly, if the dispatcher has $o(\log(n))$ bits of memory, then the distribution of the destination of the incoming job is uniform (possibly zero) outside the set of sampled servers.

\begin{proposition}\label{prop:symDispatching}
Let $V$ be a uniform random variable over $[0,1]$. For all $n$ large enough, for every memory state $m\in\mathcal{M}_n$, every set of indices $S\in\mathcal{P}(\{1,\dots,n\})$ with $|S|\in o(n)$, every queue vector state ${\bf q}\in\mathcal{Q}^n$, every rate vector ${\bf \mu}\in\mathbb{R}_+^n$, and  every job size $w \in\mathbb{R}_+$, we have
\begin{align*}
 &\mathbb{P}\Big(f_2\big(m,w,\{({\bf q}_i,{\bf \mu}_i,i):i\in S\},V\big)=j\Big)
\end{align*}
is the same for all $j\notin S$.
\end{proposition}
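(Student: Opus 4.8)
The plan is to mimic, at the level of the dispatching function $f_2$, the symmetrization argument already used for $f_1$ in Corollary~\ref{cor:symSampling}, but now invoking the second clause of the weak-symmetry assumption (Assumption~\ref{def:weakSymmetry}) together with the memory limitation. The key point is that $o(\log n)$ bits of memory give $|\mathcal{M}_n| = 2^{c_n} \in n^{o(1)}$ memory states, which is too few for the permutations $\sigma_M$ they induce on $\mathcal{M}_n$ to break the symmetry among the roughly $n$ servers lying outside $S$; this is exactly the combinatorial content of Proposition~\ref{prop:symSampling} (and the relevant proposition in \cite{negativeResult}) transplanted to the destination function. Concretely, I expect this is stated in \cite{negativeResult} as the analogue of Proposition 5.1 for dispatching, so the cleanest route is to cite it as a special case, checking only that the hypothesis $|S| \in o(n)$ suffices in place of $|S| \le \sqrt n$.

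If one instead wants a self-contained argument, here is the route. Fix $m$, $w$, $S$, ${\bf q}$, ${\bf \mu}$ and abbreviate $f_2(V) \triangleq f_2(m,w,\{({\bf q}_i,{\bf \mu}_i,i):i\in S\},V)$. Take two indices $j,k \notin S$ and let $\tau$ be the transposition swapping $j$ and $k$; it fixes $S$ pointwise. By Assumption~\ref{def:weakSymmetry}(2) applied with the permutation $\tau$, there is a memory permutation $\tau_M$ such that $\tau(f_2(m,w,\{({\bf q}_i,{\bf \mu}_i,i):i\in S\},V)) \overset{d}{=} f_2(\tau_M(m),w,\{({\bf q}_i,{\bf \mu}_i,\tau(i)):i\in S\},V)$. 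Since $\tau$ fixes $S$, the set appearing on the right is just $\{({\bf q}_i,{\bf \mu}_i,i):i\in S\}$, so $\mathbb{P}(f_2(V)=j)$ equals $\mathbb{P}(f_2^{\tau_M}(V)=k)$, where $f_2^{\tau_M}$ has the same form but with memory state $\tau_M(m)$. The remaining task is to remove the change of memory state: one must show that the memory limitation forces $\mathbb{P}(f_2^{\tau_M(m)}(V)=k) = \mathbb{P}(f_2^{m}(V)=k)$ for all $k \notin S$. This is where the counting argument from \cite{negativeResult} enters — averaging the identity over all transpositions of $\{1,\dots,n\}\setminus S$ and using that $\tau_M$ ranges over a group of size at most $|\mathcal{M}_n|! $ acting on only $2^{c_n}$ states, which is negligible compared to the $(n-|S|)!$ relabelings of the outside servers, so the dispatch probabilities outside $S$ must be permutation-invariant for all $n$ large enough.

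The main obstacle is precisely this last step: decoupling the claim from the accompanying memory permutation $\tau_M$. Weak symmetry only gives a \emph{joint} statement tying server relabelings to memory relabelings, and $\tau_M$ need not be the identity, so a single application of the assumption does not immediately yield uniformity of $f_2(\cdot)=j$ over $j\notin S$ at a fixed $m$. Resolving it requires the quantitative pigeonhole/orbit-counting lemma of \cite{negativeResult} — essentially that a group acting on $2^{c_n} = n^{o(1)}$ memory states cannot realize enough distinct behaviors to distinguish $\Theta(n)$ symmetric servers — which is why I would ultimately present Proposition~\ref{prop:symDispatching} as a corollary of the corresponding result in \cite{negativeResult}, noting as before that its proof goes through under the weaker size bound $|S|\in o(n)$, rather than reproving the combinatorics here.
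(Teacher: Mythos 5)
Your proposal is correct and takes essentially the same route as the paper: the paper's entire proof is to cite the result as a special case of Proposition~5.2 in \cite{negativeResult}, which is exactly your final recommendation, and you correctly identify that the only thing a self-contained argument would need beyond Assumption~\ref{def:weakSymmetry} is the memory-counting step that is carried out in that reference. Your additional sketch is a reasonable outline of that underlying argument, and you rightly flag that decoupling the induced memory permutation is the nontrivial part rather than claiming it follows from a single application of the symmetry assumption.
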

\begin{proof}
This is a special case of Proposition 5.2 in \cite{negativeResult}.
\end{proof}

\subsection{High arrival rate to slow servers}\label{subsec:highArrivalRate}
In this subsection we will leverage the results of the previous subsection to show that the total workload in the system diverges in time.\\

For every $t\geq 0$, let $\mathcal{W}^n(t)$ be the total remaining workload in the system at time $t$.

\begin{lemma}\label{lem:explodingWorkload}
Fix some $\lambda>0$, and suppose that the service rate of $\lfloor n/2 \rfloor$ servers is equal to some  $\epsilon_n>0$. Then, there exists a positive sequence $\big\{b_n(\lambda)\big\}_{n\geq 1}$, which is completely determined by $\lambda$ (i.e., independent of $\epsilon_n$) such that $b_n\in \Theta\big(e^{-\alpha_n/n}\big)$, and
\[ \liminf_{t\to\infty} \frac{\mathcal{W}^n(t)}{t} \geq \big[b_n(\lambda)-\epsilon_n \big]n, \qquad a.s., \]
for all $n$ large enough.
\end{lemma}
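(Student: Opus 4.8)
The plan is to show that workload is injected into the slow servers at total rate at least $b_n(\lambda)n$ for a quantity $b_n(\lambda)$ that does not depend on $\epsilon_n$, while the slow servers can drain workload at total rate at most $\lfloor n/2\rfloor\epsilon_n$; since $\mathcal{W}^n(t)$ dominates the workload currently held at the slow servers, subtracting the two rates yields the claimed bound. The first step is to use the message-rate constraint to argue that the dispatcher can query many servers only for a negligible fraction of the arriving workload. Fixing $L$ large enough that $\mathbb{E}\big[(W_1-L)^+\big]\le 1/4$ (so the truncated sizes $\tilde W_k\triangleq\min\{W_k,L\}$ have $\mathbb{E}[\tilde W_1]\ge 3/4$; here $L$ depends only on the job-size law), I would set $K_n\triangleq\lceil 4L\alpha_n/(\lambda n)\rceil$, which is $o(n)$ because $\alpha_n\in o(n^2)$. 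Since the average-message-rate bound forces, a.s., the number of messages up to time $t$ to be eventually at most $2\alpha_n t$, and each arrival with $|S_k|>K_n$ uses more than $2K_n$ messages, one gets $\#\{k\le A_n(t):|S_k|>K_n\}\le\alpha_n t/K_n$, hence $\sum_{k\le A_n(t)}\tilde W_k\mathds{1}\{|S_k|>K_n\}\le L\alpha_n t/K_n\le\lambda nt/4$; combining with the strong law for $\sum_{k\le A_n(t)}\tilde W_k\sim\mathbb{E}[\tilde W_1]\lambda nt\ge 3\lambda nt/4$ gives $\sum_{k\le A_n(t)}\tilde W_k\mathds{1}\{|S_k|\le K_n\}\ge\lambda nt/2$ for all $t$ large, a.s.

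The second step is to show that every ``economical'' arrival has a constant chance of landing on a slow server. Consider an arrival with memory state $m$, job size $w$, and $|S_k|=s\le K_n$. By Corollary~\ref{cor:symSampling}, conditioned on $|S_k|=s$ the set $S_k$ is uniform among the $s$-subsets of $\{1,\dots,n\}$, so $\mathbb{P}(S_k\subseteq\{\text{slow servers}\}\mid|S_k|=s)=\binom{\lfloor n/2\rfloor}{s}\big/\binom{n}{s}\ge 4^{-s}\ge 4^{-K_n}$ for $n$ large. On that event, either $D_k\in S_k$, in which case the job lands on a slow server, or $D_k\notin S_k$, in which case Proposition~\ref{prop:symDispatching} makes $D_k$ uniform over the $n-s$ unsampled servers, at least $\lfloor n/2\rfloor-s\ge n/4$ of which are slow. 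Hence $\mathbb{P}\big(D_k\text{ slow}\mid m,w,|S_k|=s,\mathbf{Q}(T_k^-)=\mathbf{q}\big)\ge\tfrac14\,4^{-K_n}$ uniformly over $m$, $w$, $\mathbf{q}$, and $s\le K_n$; I will write $b_n'$ for this lower bound, and note that $b_n'\in\Theta(e^{-\alpha_n/n})$ and does not involve $\epsilon_n$.

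The third step combines the two. Letting $\mathcal{F}_{k-1}$ be the $\sigma$-algebra of everything up to time $T_k$ apart from the fresh variables $W_k,U_{1,k},U_{2,k}$ used to dispatch the $k$-th job, Step 2 gives $\mathbb{E}\big[\tilde W_k\mathds{1}\{D_k\text{ slow}\}\mathds{1}\{|S_k|\le K_n\}\mid\mathcal{F}_{k-1}\big]\ge b_n'\,\mathbb{E}\big[\tilde W_k\mathds{1}\{|S_k|\le K_n\}\mid\mathcal{F}_{k-1}\big]$, because conditioning on $\mathcal{F}_{k-1}$ fixes $m=M(T_k^-)$ and $\mathbf{q}=\mathbf{Q}(T_k^-)$. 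The centered partial sums $\sum_{k\le N}\big(Y_k-\mathbb{E}[Y_k\mid\mathcal{F}_{k-1}]\big)$ with $Y_k=\tilde W_k\mathds{1}\{D_k\text{ slow}\}\mathds{1}\{|S_k|\le K_n\}$, and with $Y_k'=\tilde W_k\mathds{1}\{|S_k|\le K_n\}$, are martingales with increments in $[-L,L]$, hence $o(N)$ a.s.; evaluating at $N=A_n(t)$ (using $A_n(t)/t\to\lambda n$) and inserting Step 1, the workload routed to slow servers by time $t$ — which equals $\sum_{k\le A_n(t):\,D_k\text{ slow}}W_k\ge\sum_{k\le A_n(t)}\tilde W_k\mathds{1}\{D_k\text{ slow}\}\mathds{1}\{|S_k|\le K_n\}$ — is at least $b_n'\lambda nt/2-o(nt)$ for all $t$ large, a.s. Subtracting the at most $\lfloor n/2\rfloor\epsilon_n t\le n\epsilon_n t$ units drained by the slow servers, $\mathcal{W}^n(t)\ge b_n'\lambda nt/2-n\epsilon_n t-o(nt)$, so $\liminf_{t\to\infty}\mathcal{W}^n(t)/t\ge[b_n(\lambda)-\epsilon_n]n$ with $b_n(\lambda)\triangleq b_n'\lambda/2\in\Theta(e^{-\alpha_n/n})$, depending only on $\lambda$, $n$, $\alpha_n$ (and the job-size law) but not on $\epsilon_n$.

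The step I expect to be the main obstacle is handling the freedom that the framework grants the dispatcher to observe the incoming size $W_k$ before choosing $S_k$: in principle it could spend its message budget precisely on the largest jobs, so that the economical arrivals carry little workload. The truncation at level $L$ together with the scaling $K_n\asymp L\alpha_n/n$ in Step 1 is designed to neutralize exactly this — an arrival with $|S_k|>K_n$ is so expensive that, under the $o(n^2)$ budget, such arrivals collectively carry only an $o(1)$ fraction of the truncated workload — after which Step 2 and the routine conditional strong law (with the random, time-changing summation limit $A_n(t)$) complete the argument.
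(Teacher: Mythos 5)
Your proposal is correct and follows essentially the same route as the paper's proof: use the message budget to show that most arrivals query only $O(\alpha_n/n)$ servers, then invoke Corollary~\ref{cor:symSampling} and Proposition~\ref{prop:symDispatching} to get a per-arrival probability of order $e^{-\Theta(\alpha_n/n)}$, uniform in the memory and queue states, of landing on a slow server, and finally compare the resulting workload input rate with the $\epsilon_n\lfloor n/2\rfloor$ drain rate. The only differences are bookkeeping: the paper filters on jobs of size at least $1/2$ (so each counted job carries workload at least $1/2$) where you truncate at $L$, and you spell out the martingale strong law needed to convert the conditional per-arrival bounds into almost-sure long-run rates, a step the paper leaves implicit.
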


\begin{proof}
Let $\overline{A}_n(t)$ be the counting process of arrivals with a job size of at least $1/2$, and let us define
\[ p_{1/2}\triangleq \mathbb{P}\left( W_1 \geq \frac{1}{2} \right). \]
Since the arrivals are modeled as a renewal process of rate $\lambda n$, and the job sizes $\{W_k\}_{k=1}^\infty$ are i.i.d. with unit mean, it follows that $\overline{A}_n(t)$ is a renewal counting process with rate $\lambda n p_{1/2}>0$. On the other hand, since the average message rate (cf. Equation~\ref{eq:messageRate2}) is upper bounded by $\alpha_n$ almost surely, we have
\begin{equation*}
  \limsup_{t\to\infty} \frac{1}{t} \sum\limits_{k=1}^{\overline{A}_n(t)} 2|S_k| \leq \alpha_n, \qquad a.s.
\end{equation*}
Combining this with the fact that
\begin{align*}
  \limsup_{t\to\infty} \frac{1}{t} \sum\limits_{k=1}^{\overline{A}_n(t)} 2 \left( \frac{\alpha_n}{\lambda n p_{1/2}} \right) \mathds{1}_{\big\{|S_k|> \frac{\alpha_n}{\lambda n p_{1/2}}\big\}} &\leq \limsup_{t\to\infty} \frac{1}{t} \sum\limits_{k=1}^{\overline{A}_n(t)} 2 |S_k|,
\end{align*}
we obtain
\[ \limsup_{t\to\infty} \frac{1}{t} \sum\limits_{k=1}^{\overline{A}_n(t)} \mathds{1}_{\big\{|{S}_k|> \frac{\alpha_n}{\lambda n p_{1/2}}\big\}} \leq \frac{\lambda n p_{1/2}}{2}. \]
This in turn implies that
\begin{align}
  \liminf_{t\to\infty} \frac{1}{t} \sum\limits_{k=1}^{\overline{A}_n(t)} \mathds{1}_{\big\{|{S}_k|\leq \frac{\alpha_n}{\lambda n p_{1/2}}\big\}} &= \liminf_{t\to\infty} \frac{1}{t} \sum\limits_{k=1}^{\overline{A}_n(t)} \left( 1- \mathds{1}_{\big\{|{S}_k|> \frac{\alpha_n}{\lambda n p_{1/2}}\big\}} \right) \nonumber \\
  &\geq \liminf_{t\to\infty} \frac{\overline{A}_n(t)}{t} + \liminf_{t\to\infty} \frac{1}{t} \sum\limits_{k=1}^{\overline{A}_n(t)} - \mathds{1}_{\big\{|{S}_k|> \frac{\alpha_n}{\lambda n p_{1/2}}\big\}} \nonumber \\
  &=\lambda n p_{1/2} - \limsup_{t\to\infty} \frac{1}{t} \sum\limits_{k=1}^{\overline{A}_n(t)} \mathds{1}_{\big\{|{S}_k|> \frac{\alpha_n}{\lambda n p_{1/2}}\big\}} \nonumber \\
  &\geq \frac{\lambda n  p_{1/2}}{2}, \qquad a.s. \label{eq:smallSampling}
\end{align}

Let $N_{\epsilon_n}\subset\{1,\dots,n\}$ be the set of servers with service rate $\epsilon_n$, which was assumed to have cardinality $\lfloor n/2 \rfloor$. Let $s$ be a nonnegative integer upper bounded by $s^*_n \triangleq \alpha_n/\lambda n p_{1/2}$. Since $s^*_n \in o(n)$, Corollary~\ref{cor:symSampling} applies, and we obtain
\begin{align*}
\mathbb{P}\big({S}_k\subset N_{\epsilon_n} \,\big|\, |{S}_k| = s \big)
& =  \frac{{ \lfloor n/2 \rfloor \choose s }}{{n \choose s }} \\
& = \frac{ \lfloor n/2 \rfloor \big( \lfloor n/2 \rfloor - 1 \big) \cdots \big( \left\lfloor n/2 \right\rfloor - s + 1 \big)}{n\big(n-1\big)\cdots \big(n-s +1\big)} \\
& \geq \left(\frac{1}{3}\right)^s,
\end{align*}
for all $n$ large enough, where in the last inequality we used that $s^*_n\in o\big(n\big)$. Since this is true for all $s$ in the given range, we obtain
\[ \mathbb{P}\big({S}_k\subset N_{\epsilon_n} \,\big|\, |{S}_k|\leq s^*_n \big) \geq \left(\frac{1}{3}\right)^{s^*_n}, \]
for all $k\geq 1$, and for all $n$ large enough. Combining this with Equation~\eqref{eq:smallSampling}, we obtain
\begin{equation}
  \liminf_{t\to\infty} \frac{1}{t} \sum\limits_{k=1}^{\overline{A}_n(t)} \mathds{1}_{\big\{|{S}_k|\leq s^*_n, \,\, S_k\subset N_{\epsilon_n}\big\}} \geq \frac{\lambda n  p_{1/2}}{2}\left(\frac{1}{3}\right)^{s^*_n}, \label{eq:slowSampling}
\end{equation}
almost surely, for all $n$ large enough.

Let us fix a particular set $S$ that satisfies
$|S| \leq s^*_n$, and $S\subset N_{\epsilon_n}$. For any such set, Proposition~\ref{prop:symDispatching} implies
\begin{align*}
 \mathbb{P}\big(D_k\in N_{\epsilon_n} \, \big| \, S_k=S \big) &= \mathbb{P}\big(D_k\in N_{\epsilon_n} \, \big| \, D_k \in S,\,\, S_k=S \big) \mathbb{P}\big(D_k\in S \, \big| \, S_k=S \big) \\
 &\quad + \mathbb{P}\big(D_k\in N_{\epsilon_n} \, \big| \, D_k \in S^c,\,\, S_k=S \big) \mathbb{P}\big(D_k\in S^c \, \big| \, S_k=S \big) \\
 &= \mathbb{P}\big(D_k\in S \, \big| \, S_k=S \big) + \frac{|N_{\epsilon_n}\cap S^c|}{|S^c|} \mathbb{P}\big(D_k\in S^c \, \big| \, S_k=S \big) \\
 &\geq \frac{|N_{\epsilon_n}\cap S^c|}{|S^c|} \\
 &=\frac{\left\lfloor \frac{n}{2}\right\rfloor - |S|}{n-|S|} \\
 &\geq \frac{\left\lfloor \frac{n}{2}\right\rfloor - s^*_n}{n} \\
 &\geq \frac{1}{3},
\end{align*}
for all $n$ large enough, where in the last inequality we used that $s^*_n\in o\big(n\big)$. Since this is true for every set $S$ with the given properties, we conclude that
\[ \mathbb{P}\big(D_k\in N_{\epsilon_n} \,\big|\, {S}_k\subset N_{\epsilon_n}, \, |{S}_k|\leq s^*_n \big) \geq \frac{1}{3}. \]
for all $k\geq 1$, and for all $n$ large enough. Combining this with Equation \eqref{eq:slowSampling}, we obtain
\begin{align*}
  \liminf_{t\to\infty} \frac{1}{t} \sum\limits_{k=1}^{\overline{A}_n(t)} \mathds{1}_{\big\{D_k\in N_{\epsilon_n}\big\}} &\geq \liminf_{t\to\infty} \frac{1}{t} \sum\limits_{k=1}^{\overline{A}_n(t)} \mathds{1}_{\big\{D_k\in N_{\epsilon_n}, \, |{S}_k|\leq s^*_n, \, {S}_k\subset N_{\epsilon_n}\big\}} \\
   &\geq \frac{\lambda n  p_{1/2}}{6}\left(\frac{1}{3}\right)^{s^*_n}, \qquad a.s.,
\end{align*}
for all $n$ large enough. Note that this is a lower bound on the average rate of arrival of jobs with size at least $1/2$, to the servers with service rate $\epsilon_n$. On the other hand, those servers have a total processing rate of $\epsilon_n \lfloor n/2 \rfloor$ units of workload per unit of time. Then, since the total workload of the system is at least as much as the workload of the servers with rate $\epsilon_n$, we have
\begin{align*}
 \liminf_{t\to\infty} \frac{\mathcal{W}^n(t)}{t} &\geq \liminf_{t\to\infty} \frac{1}{t} \sum\limits_{k=1}^{\overline{A}_n(t)} \frac{1}{2}\mathds{1}_{\big\{D_k\in N_{\epsilon_n}\big\}} -\epsilon_n \left\lfloor \frac{n}{2} \right\rfloor \\
 & \geq \left[\frac{\lambda p_{1/2}}{6}\left(\frac{1}{3}\right)^{s^*_n} - \epsilon_n  \right] n,
\end{align*}
for all $n$ large enough. This establishes the desired result, with $b_n(\lambda)$ equal to the first term in the bracketed expression above.
\end{proof}

Lemma~\ref{lem:explodingWorkload} implies that, for all $n$ large enough, the total workload in the system increases at least linearly with time, as long as $\lfloor n/2 \rfloor$ of the servers have rate $\epsilon_n<b_n(\lambda)$. In particular, this will happen if $\epsilon_n\in O\big(e^{-\alpha_n/n}\big)$.

 Since the above is true for every weakly symmetric policy with $o(\log n)$ bits of memory, and with an average message rate upper bounded by $\alpha_n\in o\big(n^2\big)$ almost surely, it follows that, for all $n$ large enough, the stability region of any such policy is contained in a proper subset
 $\Gamma_n(\lambda,\alpha_n)$ of $\Sigma_n$ which excludes service rate vectors for which $\lfloor n/2 \rfloor$ of the servers have rate $\epsilon_n<b_n(\lambda)$.

\bibliographystyle{imsart-number}
\bibliography{references}
\end{document}